
\documentclass[pdftex,twocolumn,epjc3]{svjour3}          

\usepackage[]{changes} 

\RequirePackage[T1]{fontenc}

\smartqed  
\usepackage{xurl}
\RequirePackage{graphicx}
\RequirePackage{mathptmx}      
\RequirePackage{flushend}
\RequirePackage[numbers,sort&compress]{natbib}
\RequirePackage[colorlinks,citecolor=blue,urlcolor=blue,linkcolor=blue]{hyperref}
\usepackage[utf8]{inputenc}
\usepackage{amssymb,amsmath,amsfonts}
\usepackage{enumerate,mathrsfs}
\usepackage{xcolor}
\usepackage{cancel}
\usepackage[squaren]{SIunits}
\usepackage{mathtools}
\usepackage{prettyref}
\usepackage{blkarray}
\usepackage{adjustbox}
\usepackage{array,booktabs,tabularx,fancybox}
\usepackage{comment}
\usepackage[normalem]{ulem}
\usepackage{afterpage}

\newtheorem{prop}{Proposition}

\newtheorem{cor}{Corollary}


\begin{document}

\title{Geometry of the neutrino mixing space}


\author{Wojciech Flieger\thanksref{e1,addr1}
        \and
        Janusz Gluza\thanksref{e2,addr2} 
}

\thankstext{e1}{e-mail: flieger@mpp.mpg.de}
\thankstext{e2}{e-mail: janusz.gluza@us.edu.pl}

\institute{Max-Planck-Institut f\"ur Physik, Werner-Heisenberg-Institut, 80805 M\"unchen, Germany\label{addr1}
          \and
          Institute of Physics, University of Silesia, Katowice, Poland\label{addr2}
}

\date{}

\maketitle

\begin{abstract}

We study a geometric structure of a physical region of neutrino mixing matrices as part of the unit ball of the spectral norm.
Each matrix from the geometric region 
is a convex combination of unitary \texttt{PMNS} matrices. The  disjoint subsets corresponding to a different minimal number of additional neutrinos are described as relative interiors of faces of the unit ball. We determined the Cara\-th\'eo\-dory's number showing that, at most, four unitary matrices of dimension three are necessary to represent any matrix   from the neutrino geometric region. For matrices  which correspond to scenarios with one and two additional neutrino states, the Carath\'eodory's number is two and three, respectively. 
Further, we discuss the volume associated with different mathematical structures, particularly with unitary and orthogonal groups, and the unit ball of the spectral norm.
We compare the obtained volumes to the volume of the region of physically admissible mixing matrices for both the CP-conserving and CP-violating cases in the present scenario with three neutrino families and scenarios with the neutrino mixing matrix of dimension higher than three.
\end{abstract}

\section{Introduction}
Over the years neutrino oscillation experiments have provided in-depth information about the structure of the neutrino standard $3\times 3$ unitary mixing matrix $U_{\texttt{PMNS}}$ \cite{Zyla:2020zbs, Esteban:2020cvm}. We know already that the $\theta_{13}$ mixing angle is nonzero, and as a consequence, the (1,3) element of $U_{\texttt{PMNS}}$ is also nonzero \cite{DayaBay:2012fng,RENO:2012mkc,DoubleChooz:2012gmf}. In that way, the tri-bimaximal mixing structure has been excluded \cite{Harrison:2002er}. Recently a lot of attention  is  given to the study of the value of the neutrino CP complex phase. If it is nonzero it could shed a new light on the matter-antimatter problem \cite{T2K:2019bcf}. 
On top of that, there is a possibility that more than three known neutrinos exist. In this case new neutrino states, commonly known as sterile neutrinos, can mix with active Standard Model neutrinos. This implies that the $3 \times 3$ neutrino mixing matrix is no longer unitary. There are various approaches to deal with the non-unitarity problem, for instance 
{a decomposition of a general matrix into a product with a unitary matrix are considered. The two often used approaches are known as the $\alpha$ and $\eta$ parametrizations \cite{Antusch:2006vwa, FernandezMartinez:2007ms, Xing:2007zj, Xing:2011ur, Escrihuela:2015wra, Blennow:2016jkn}. In the $\alpha$ parametrization's framework a small deviation from unitarity is encoded into a lower triangular matrix,  whereas in the $\eta$ framework, possible deviations from unitarity are encoded in a Hermitian matrix.} 
In \cite{Bielas:2017lok} a different approach was proposed based on a matrix
theory where the interval neutrino mixing matrix $U_{int}$ is studied using matrix theory methods, and its connections to the non-standard neutrino
physics have been established by exploring singular values and contractions. 
In \cite{Flieger:2019eor} the matrix theory has been applied to phenomenological studies and new limits on light-heavy neutrino mixings in the $3+1$ model (three light, known neutrinos with one additional sterile neutrino) have been obtained. 
In another work where the matrix theory methods have been explored in the context of neutrino physics, 
conditions for the existence of the gap in the seesaw mass spectrum  {have} been established and {justified}  \cite{Besnard:2016tcs,Flieger:2020lbg}. 
We should also mention that an interesting mathematical connection between eigenvalues and eigenvectors has been rediscovered in the context of neutrino oscillations in matter \cite{Denton:2019ovn,Denton:2019pka}.
There has been also attempts to predict neutrino masses by geometric and topological methods. In \cite{Asselmeyer-Maluga:2018ywa} the neutrino mass spectrum is explored through the model of cosmological evolution based on the exotic smooth structures.  
In this work we focus on further geometrically based studies towards the understanding of the class of physically admissible neutrino mixing matrices where the $3\times 3$ mixing matrix could be a part of a higher dimensional unitary mixing matrix. 
Our aim is to study the structure of a geometric region $\Omega$ that corresponds to physically admissible mixing matrices, which has been introduced in \cite{Bielas:2017lok}. 

In the next chapter we give a general setting for our discussion introducing a neutrino mixing matrix its most common parametrization and current experimental limits for the mixing parameters.
In the third chapter, we define the region of physically admissible mixing matrices and its subsets corresponding to a different minimal number of additional neutrinos. 
In the fourth chapter, we recognize the geometric region as a subset of the unit ball of a spectral norm and describe its facial structure. 
Next, we will connect the facial structure of $\Omega$ to the minimal number of additional sterile neutrinos and we will determine the so-called Carath\'eodory number which informs us about a minimal number of unitary $3 \times 3$ matrices which are needed to span the whole $\Omega$ space for a given number of sterile neutrinos. This allows for an optimal construction of physical mixing matrices that can be used for further analysis of scenarios involving sterile neutrinos.
Finally, we determine the volume of this region for CP conserving and violating cases. 
The article is finished with a summary and outlook. 
The main text is supported by the Appendix containing auxiliary definitions and theorems.

\section{Setting: Neutrino mixing matrix and experimental data}
Neutrino flavour fields are (linear) combinations of the massive fields
\begin{equation}
\nu^{f}_{l} = \sum_{i=1}^n U_{li} \nu^{m}_{i}   .
\label{eq:mix}
\end{equation}
This property of neutrino fields is called the neutrino mixing mechanism. The mixing of neutrinos occurs regardless if they are Dirac or Majorana particles \cite{Bilenky:1980cx,Gluza:2016qqv}.  
As the massive and flavour fields form two orthogonal bases in the state space, the transition from one base to another can be done by the unitary matrix. This restricts coefficients of the linear combination, the sum of squares of their absolute values must equal one 
\begin{equation}
\nu^{f}_{l} = \sum_{i=1}^n U_{li} \nu^{m}_{i} \quad \text{ with } \quad \sum_{i=1}^n \vert U_{li} \vert^{2}=1.
\end{equation}
For $n=3$ in (\ref{eq:mix}), the $3\times 3$ unitary matrix $U$ corresponding to three light known neutrino mixing is known as the \texttt{PMNS} mixing matrix \cite{Pontecorvo:1957qd,Maki:1962mu}. 
The general $n \times n$ complex matrix has $n^2$ complex parameters or equivalently $2n^2$ real parameters. The unitarity condition $UU^{\dag} =I$ imposes additional $n^{2}$ constraints on the elements. It can be seen from the $UU^{\dag}$ which is a Hermitian matrix and has $n$ independent diagonal elements and $n^{2}-n$ independent off-diagonal elements which together give $n^{2}$ independent elements or conditions imposed on the unitary matrix. Thus, the $n \times n$ unitary matrix has $2n^{2} - n^{2} = n^{2}$ independent real parameters. An alternative way to see this is by writing a unitary matrix as the matrix exponent of the Hermitian matrix, i.e. $U = e^{iH}$, where the $H$ matrix is Hermitian and thus has $n^{2}$ independent real parameters which implies that $U$ also has $n^{2}$ independent real parameters. These parameters can be split into two categories: rotation angles and complex phases. The number of angles corresponds to the number of parameters of the orthogonal matrix which has $\frac{n(n-1)}{2}$ independent real parameters. The remaining parameters correspond to phases. Thus, the $n^{2}$ independent real parameters of the unitary matrix split into
\begin{equation}
\begin{split}
&\text{angles: } \quad \frac{n(n-1)}{2}, \\
&\text{phases: } \quad  \frac{n(n+1)}{2}.
\end{split}    
\end{equation}
However, not all phases are physical observables. The charged leptons and neutrino fields can be redefined as
\begin{equation}
\nu_{i} \rightarrow e^{i \alpha_{i}} \nu_{i} \text{ and } l \rightarrow e^{i \beta_{l}} l.    
\end{equation}
The $\alpha_{i}$ and $\beta_{l}$ phases can be chosen in such a way that they eliminate $2n-1$ phases from the mixing matrix leaving the Lagrangian invariant. This reduces the number of phases of the mixing matrix. The number of remaining free parameters is $(n-1)^{2}$ which divides into
\begin{equation}
\begin{split}
&\text{angles: } \quad \frac{n(n-1)}{2}, \\
&\text{phases: } \quad  \frac{(n-1)(n-2)}{2}.
\end{split}    
\label{dirac_para}
\end{equation}
These are the numbers under consideration when neutrinos are of the Dirac type. However, we know already that neutrinos can also be particles of the Majorana type. Then the Majorana condition $\nu_{i}^{\mathcal{C}}=\nu_{i}$ where ${\mathcal{C}}$ is the charge conjugate operator, 
fixes phases of the neutrino fields, which no longer can be chosen to eliminate phases in the mixing matrix. On the other hand, the phases of charged leptons are still arbitrary and can be chosen in such a way as to eliminate phases from the mixing matrix. Thus, from all $\frac{n(n+1)}{2}$ phases of the unitary matrix, $n$ phases can be eliminated. Finally, for the Majorana neutrinos, the number of free parameters of the mixing matrix is as follows 
\begin{equation}
\begin{split}
&\text{angles: } \quad \frac{n(n-1)}{2}, \\
&\text{phases: } \quad  \frac{n(n-1)}{2}.
\end{split}   
\label{majorana_para}
\end{equation}

Knowing the number of parameters necessary to describe the mixing matrix, we can find its explicit form by invoking a particular parametrization. In the minimal scenario, the mixing matrix is a $3 \times 3$ matrix and thus for the Dirac case we have three mixing angles and one complex phase.
The standard way of parametrizing the \texttt{PMNS} mixing matrix is as the product of three rotation matrices with additional complex phase in one of them, i.e. in terms of Euler angles $\theta_{12}$, $\theta_{13}$, $\theta_{23}$ and complex phase $\delta$

\begin{small}
\begin{equation}
\label{pmns_para}
\begin{split}
U_{\texttt{PMNS}}&=
\left(
\begin{array}{ccc}
1 & 0 & 0 \\
0 & c_{23}  & s_{23} \\
0 & -s_{23} & c_{23}
\end{array}
\right)
\left(
\begin{array}{ccc}
c_{13} & 0 & s_{13} e^{-i\delta} \\
0 & 1 & 0 \\
-s_{13} e^{i\delta} & 0 & c_{13}
\end{array}
\right)
\left(
\begin{array}{ccc}
c_{12} & s_{12} & 0 \\
-s_{12} & c_{12} & 0 \\
0 & 0 & 1
\end{array}
\right) \\
&\equiv
\left(
\begin{array}{ccc}
U_{e1} & U_{e2} & U_{e3} \\
U_{\mu 1} & U_{\mu 2} & U_{\mu 3} \\
U_{\tau 1} & U_{\tau 2} & U_{\tau 3}
\end{array}
\right).
\end{split}
\end{equation}
\end{small}

In the case of Majorana neutrinos we must include additional phases, which is done typically by multiplying the \texttt{PMNS} mixing matrix from the right-hand side by the diagonal matrix of phases $P^{M}$. For the $3 \times 3$ mixing matrix, we must add two more complex phases. The Majorana neutrino mixing matrix is then given by
\begin{equation}
U_{\texttt{PMNS}}^{M} = U_{\texttt{PMNS}}P^{M}, \text{ where } P^{M}=diag(e^{i\gamma_{1}}, e^{i\gamma_{2}},1).    
\end{equation} 

The oscillation experiments provide the major information about the structure of the neutrino mixing matrix.
The current data gives the following limits for the mixing parameters \cite{Zyla:2020zbs, Esteban:2020cvm}
\begin{equation}
\label{pmns_limits}
\begin{split}
&\theta_{12} \in [31.27^{\circ}, 35.86^{\circ}] , \quad \theta_{23} \in [40.1^{\circ}, 51.7^{\circ}], \\
&\theta_{13} \in [8.20^{\circ}, 8.93^{\circ}] , \quad \quad \delta \in [120^{\circ}, 369^{\circ}].
\end{split}    
\end{equation}

By inputting these ranges into \eqref{pmns_para} we get allowed ranges for the mixing matrix elements \cite{Esteban:2020cvm} (at the $3 \sigma$ confidence level)

\begin{equation}
\vert U \vert_{3\sigma} =
\left(
\begin{array}{ccc}
\left[ 0.801,0.845 \right] & \left[ 0.513,0.579 \right] & \left[ 0.143,0.155 \right] \\
\left[ 0.243,0.500 \right] & \left[ 0.471,0.689 \right] & \left[ 0.637,0.776 \right] \\
\left[ 0.271,0.525 \right] & \left[ 0.477,0.694 \right] & \left[ 0.613,0.756 \right]
\end{array}
\right).    
\end{equation}

The exact values of the allowed ranges in the CP invariant case presented as the interval matrix are 
\begin{equation} 
\begin{split}
&O_{int} = \\ 
&\left(
\begin{array}{ccc}
\left[0.801,0.845\right] & \left[0.513,0.579\right] & \left[0.143,0.155\right] \\
\left[-0.529,-0.417\right] & \left[0.431,0.606\right] & \left[0.637,0.776\right] \\
\left[0.233,0.388\right] & \left[-0.721,-0.586\right] & \left[0.613,0.756\right]
\end{array}
\right),
\label{u_int_real}
\end{split}
\end{equation}
whereas when the non-zero CP phase $\delta$ is included, the elements of the $U_{int}$ are within the following ranges
\begin{equation}
\begin{split}
&U_{e1} \in \left[ 0.801, 0.845 \right], \\
&U_{e2} \in \left [0.513, 0.579 \right], \\
&U_{e3} \in \left[ -0.155 -0.155 i, 0.155 + 0.134 i \right], \\
&U_{\mu 1} \in \left[ -0.528 - 0.0901 i, -0.218 + 0.104 i \right], \\
&U_{\mu 2} \in \left[ 0.432 - 0.0616 i, 0.707 + 0.0711 i \right], \\
&U_{\mu 3} \in \left[ 0.637, 0.776 \right], \\
&U_{\tau 1} \in \left[ 0.233 - 0.0878 i, 0.538 + 0.101 i \right], \\
&U_{\tau 2} \in \left[ -0.721 - 0.060 i, -0.453 + 0.0693 i \right], \\
&U_{\tau 3} \in \left[ 0.613, 0.756 \right].
\end{split}  
\label{u_int_complex}
\end{equation}

Though the experimental results given in (\ref{u_int_real}) and (\ref{u_int_complex}) are based on the $U_{\texttt{PMNS}}$ matrix (\ref{pmns_para}), we can reverse the problem and ask the following question: What can we learn about a geometrical structure of a region of physical mixing matrices, not restricted to $U_{\texttt{PMNS}}$, given basic mixings between light known neutrinos in (\ref{u_int_real}) or (\ref{u_int_complex})? 

We will answer this question in the following sections. 

\section{Region of physically admissible mixing matrices}\label{chap:phys_reg}
We are interested in a special class of matrices encompassing unitary matrices or matrices which can be a submatrix of a unitary matrix. These are known as contractions and are defined by the following formula $\Vert A \Vert \leq 1$ (for necessary definitions see \ref{appnorms}). 
The importance of contractions in neutrino mixing studies and their properties have been discussed in \cite{Bielas:2017lok} and \cite{Flieger:2019eor}.

We will show that a matrix constructed as a finite convex combination of unitary matrices is a contraction. 
Let $U_{i}$, $i=1,\dots,n$, be a unitary matrix, and let $A = \sum_{i=1}^{n}\alpha_{i}U_{i}$ with \\
$\alpha_{i} \geq 0$ and $\sum_{i=1}^{n}\alpha_{i}=1$, then
\begin{equation}
\Vert A \Vert = \Vert \sum_{i=1}^{n}\alpha_{i}U_{i} \Vert \leq \sum_{i=1}^{n}\alpha_{i} \Vert U_{i} \Vert = \sum_{i=1}^{n}\alpha_{i} = 1 \Rightarrow \Vert A \Vert \leq 1.  
\end{equation}
The converse is also true \cite{zhang11}, thus we have 
\begin{theorem}
A matrix $A$ is a contraction if and only if $A$ is a finite convex combination of unitary matrices.
\end{theorem}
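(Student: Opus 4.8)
The forward direction — that a finite convex combination of unitaries is a contraction — has already been established in the display preceding the statement, so the plan is to prove only the converse: every contraction $A$ with $\Vert A \Vert \leq 1$ can be written as a finite convex combination of unitary matrices. The tool I would reach for is the singular value decomposition, which encodes the norm constraint $\Vert A \Vert \leq 1$ directly in terms of the singular values and reduces the whole problem to a one-dimensional observation.

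First I would invoke the SVD to write $A = V \Sigma W^{\dagger}$, where $V$ and $W$ are unitary and $\Sigma = \mathrm{diag}(\sigma_1, \ldots, \sigma_n)$ collects the singular values. Because the spectral norm equals the largest singular value, the hypothesis $\Vert A \Vert \leq 1$ is equivalent to $0 \leq \sigma_j \leq 1$ for every $j$. This is the crucial observation: being a contraction places each $\sigma_j$ inside the interval $[0,1]$, which is exactly the range over which a real number can be written as a cosine, and nonnegativity of the singular values keeps us in $[0,1]$ rather than $[-1,1]$.

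Next I would exploit that every $\sigma_j \in [0,1]$ admits the representation $\sigma_j = \cos\theta_j = \tfrac{1}{2}(e^{i\theta_j} + e^{-i\theta_j})$ with $\theta_j \in [0,\pi/2]$. Introducing the unitary diagonal matrices $D_{\pm} = \mathrm{diag}(e^{\pm i\theta_1}, \ldots, e^{\pm i\theta_n})$, this yields $\Sigma = \tfrac{1}{2}(D_+ + D_-)$. Substituting back and using that a product of unitaries is unitary,
\begin{equation}
A = V \Sigma W^{\dagger} = \tfrac{1}{2}\left( V D_+ W^{\dagger} \right) + \tfrac{1}{2}\left( V D_- W^{\dagger} \right),
\end{equation}
so $A$ is the midpoint of the two unitary matrices $V D_+ W^{\dagger}$ and $V D_- W^{\dagger}$, hence a finite convex combination of unitaries with weights $\tfrac{1}{2}, \tfrac{1}{2}$.

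I do not expect a genuine obstacle here; the argument is short once the SVD is in place. The only points requiring care are the equivalence between the spectral-norm bound and the bound on the singular values, and the legitimacy of the cosine substitution, which rests precisely on $\sigma_j$ lying in $[0,1]$. A by-product worth recording is that the construction needs only two unitary matrices, an explicit instance of the kind of Carath\'eodory-type counting that the paper later refines for the structured region $\Omega$, where the relevant bound is larger.
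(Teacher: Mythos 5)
Your proposal is correct. Note first that the paper itself only proves the easy direction (a convex combination of unitaries is a contraction, via the triangle inequality and $\Vert U_i\Vert=1$) and disposes of the converse with a citation to Zhang's matrix-theory text; so your SVD argument is not a variant of the paper's proof but a self-contained replacement for the cited one. The argument itself is sound: $\Vert A\Vert\le 1$ is exactly $\sigma_j\in[0,1]$ for all $j$, the substitution $\sigma_j=\cos\theta_j=\tfrac12(e^{i\theta_j}+e^{-i\theta_j})$ gives $\Sigma=\tfrac12(D_++D_-)$ with $D_\pm$ unitary, and unitary invariance of the class of unitaries under left and right multiplication by $V$ and $W^\dagger$ finishes it. Two remarks on your by-product. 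First, your construction shows that \emph{two} unitaries always suffice in the complex case, which is strictly sharper than the paper's later Proposition that the Carath\'eodory number of $\mathcal{B}(3)=\mathrm{conv}(\mathcal{U}(3))$ is four; the paper's count of four comes from restricting the diagonal factors to the real sign matrices $\mathrm{diag}(\pm1,\pm1,\pm1)$ and invoking Carath\'eodory for the cube $[-1,1]^3$, whereas your diagonal phases exploit the full complex unitary group. Second, and for the same reason, your argument is genuinely complex: over $\mathbb{R}$, with orthogonal matrices in place of unitaries, the scalars of modulus one are only $\pm1$ and the cosine trick is unavailable entrywise (one would need $2\times2$ rotation blocks instead), which is precisely the setting where the cube-based count becomes the natural one. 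Neither remark is a gap in your proof of the stated theorem; both are worth keeping in mind when you compare your two-unitary representation with the paper's subsequent counting for $\Omega_1$, $\Omega_2$ and $\mathcal{B}(3)$.
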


This characterization of contractions has physical consequences. It allows gathering physically meaningful mixing matrices into a geometric region.

\begin{definition}
The region of all physically admissible mixing matrices, denoted $\Omega$, is the set of all finite convex combinations of $3 
\times 3$ unitary matrices with parameters restricted by experiments
\begin{equation}
\begin{split}
\Omega :=& conv(U_{\texttt{PMNS}}) = \\
=& \lbrace \sum_{i=1}^{m} \alpha_{i}U_{i} \mid U_{i} \in U(3), 
\alpha_{1},...,\alpha_{m} \geq 0, \sum_{i=1}^{m}\alpha_{i}=1, \\
&\theta_{12}, \theta_{13}, \theta_{23} \ \begin{rm}and\end{rm} \ \delta \ \begin{rm}given \ by\end{rm} \ \rm{experimental \ values} \rbrace.
\end{split}
\label{eq:omegaconv}
\end{equation}
\end{definition}
There is another equivalent definition of the $\Omega$ region, which reflects its geometric nature, namely as the convex hull spanned on the unitary \texttt{PMNS} matrices. 

The Corollary \ref{min_dim} given in \ref{appCS}
restricts the minimal dimension of the unitary extension of the contractions. This allows us to divide the $\Omega$ region into four disjoint subsets according to the minimal dimension of the unitary dilation
\begin{eqnarray}
     \Omega_{1}:&& \text{3+1 scenario: } \Sigma = \lbrace \sigma_{1}=1.0, \sigma_{2}=1.0,   \sigma_3 < 1.0 \rbrace  , \nonumber \\  && \label{omega_1}  \\
     \Omega_{2}:&& \text{3+2 scenario: } \Sigma = \lbrace \sigma_{1}=1.0, \sigma_{2}<1.0, \sigma_3 < 1.0 \rbrace \ , \nonumber \\  && \label{omega_2}  \\
    \Omega_{3}:&& \text{3+3 scenario: } \Sigma = \lbrace  \sigma_{1}<1.0, \sigma_{2}<1.0, \sigma_3 < 1.0 \rbrace  ,  \nonumber \\  && \label{omega_3}  \\
    \Omega_{4}:&& \texttt{PMNS} \text{\; scenario: } \Sigma = \lbrace \sigma_{1}=1, \sigma_{2}=1, \sigma_{3}=1 \rbrace \label{omega_4} .
\end{eqnarray}

This division allows to analyze individually scenarios with a different number of sterile neutrinos.
Thus, the study of geometric features of this region gives a possibility for a better understanding of neutrino physics, especially regarding the number of additional sterile neutrinos and the structure of the complete mixing matrix. 

It is important to notice that matrices from the $\Omega_{1}$ subset can be extended to unitary matrices of arbitrary dimension, starting from the dimension four. The same is true for contractions from the subset $\Omega_{2}$ which can produce any unitary matrices of dimension five or higher. It may look that there is an overlapping between matrices from different subsets of the $\Omega$ region and some of them may be redundant. {\it This is however not true, as unitary matrices produced by the contraction from each subset are unique.} It is so because contractions must end up in the $3 \times 3$ top diagonal block of a complete unitary matrix and as the subsets are disjoint, we cannot reproduce the same unitary matrices using contractions from different subsets. Thus, instead of overlapping, we should treat dilations of a given dimension of contractions from different subsets as complementary to each other.

\section{Geometry of the region of physically admissible mixing matrices}
The $\Omega$ region is a subset of the unit ball of the spectral norm
\begin{equation}
\label{unit_ball}
\mathcal{B}(n) = \lbrace A \in \mathbb{C}^{n \times n}: \Vert A \Vert \leq 1\rbrace.
\end{equation}

This fact allows us to give another characterization of the $\Omega$ region as the intersection of the $\mathcal{B}(3)$ with the interval matrix $U_{int}$ in Eqs.~(\ref{u_int_real})-(\ref{u_int_complex}), i.e.
\begin{equation}
\Omega = \mathcal{B}(3) \cap U_{int}.
\label{omega_intersection}
\end{equation}
{The relation between $\Omega$ and other involved geometric structures is visualized in Fig.~\ref{fig:geometry}.}

\begin{figure}[h!]
\begin{center}
\includegraphics[scale=0.3]{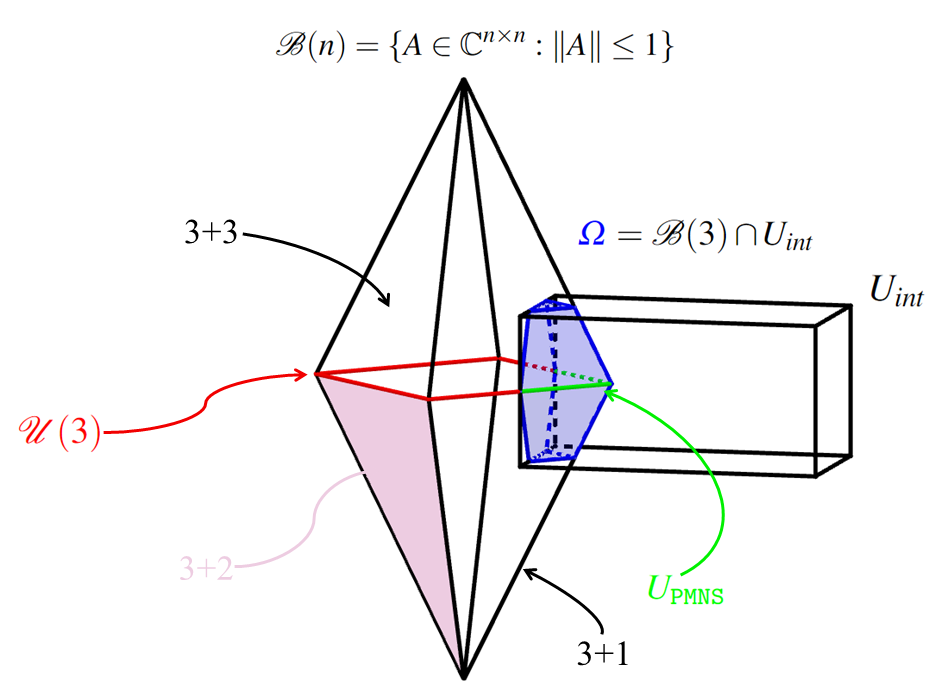} 
\end{center}
\caption{Schematic visualization of the region of physically admissible mixing matrices as an intersection of $\mathcal{B}(3)$ and $U_{int}$. The double pyramid shape corresponds to the unit ball of a spectral norm. Its middle circumference, in red, represents its extreme points, i.e. $\mathcal{U}(3)$ group. Its edges and sides represent contractions with a minimal unitary extension, $4 \times 4$ and $5 \times 5$, respectively. Whereas the interior of $\mathcal{B}(3)$ corresponds to the contraction that minimally can be extended to $6 \times 6$ unitary matrices. The cuboid represents a hypercube of the interval matrix $U_{int}$. At the intersection of these two structures is the $\Omega$ region, in blue, and the set of \texttt{PMNS} mixing matrices is highlighted in green.}
\label{fig:geometry}
\end{figure}

The geometry of $\mathcal{B}(n)$ is strictly connected to the geometry of symmetric gauge functions \cite{stewart1990matrix}.
\begin{definition}
A function $\Phi:\mathbb{R}^{n} \rightarrow \mathbb{R}$ is a symmetric gauge function if it satisfies the following conditions
\begin{enumerate}
\item $\Phi$ is a vector norm,
\item For any permutation matrix $P$ we have $\Phi(Px) = \Phi(x)$,
\item $\Phi(\vert x \vert) = \Phi(x)$.
\end{enumerate} 
\end{definition}

Von Neumann proved that symmetric gauge functions and unitarily invariant norms \eqref{uni_inv_norm} are connected to each other \cite{vonneumann_1937}, namely
\begin{theorem}
$\Vert \cdot \Vert$ is a unitary invariant norm if and only if there exists a symmetric gauge function $\Phi$ such that $\Vert A \Vert = \Phi(S(A))$ for all $A \in \mathbb{C}^{n \times n}$, where $S(A)$ is the set of singular values of $A$.
\end{theorem}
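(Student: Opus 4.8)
The plan is to establish the two implications separately, using the singular value decomposition as the bridge between matrices and their singular-value vectors. Throughout I write $S(A) = (\sigma_1(A), \dots, \sigma_n(A))$ for the vector of singular values arranged in nonincreasing order, and I recall that every $A \in \mathbb{C}^{n\times n}$ admits a factorization $A = U \Sigma V^{*}$ with $U, V$ unitary and $\Sigma = \mathrm{diag}(S(A))$.

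For the easier direction, suppose $\Vert \cdot \Vert$ is a unitarily invariant norm and set $\Phi(x) := \Vert \mathrm{diag}(x_1, \dots, x_n) \Vert$ for $x \in \mathbb{R}^{n}$. First I would check the vector-norm axioms: homogeneity and the triangle inequality transfer directly from $\Vert \cdot \Vert$ because $\mathrm{diag}(\lambda x) = \lambda\,\mathrm{diag}(x)$ and $\mathrm{diag}(x+y) = \mathrm{diag}(x) + \mathrm{diag}(y)$, while definiteness is inherited from $\Vert \cdot \Vert$. Permutation invariance follows by noting that for a permutation matrix $P$ one has $\mathrm{diag}(Px) = P\,\mathrm{diag}(x)\,P^{\top}$ with $P$ unitary, so $\Phi(Px) = \Phi(x)$ by unitary invariance. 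Likewise, writing $\varepsilon_i = \mathrm{sign}(x_i)$ and $E = \mathrm{diag}(\varepsilon_1, \dots, \varepsilon_n)$, which is unitary, gives $\mathrm{diag}(x) = E\,\mathrm{diag}(|x|)$ and hence $\Phi(x) = \Phi(|x|)$. Thus $\Phi$ is a symmetric gauge function, and applying unitary invariance to the SVD yields $\Vert A \Vert = \Vert U \Sigma V^{*} \Vert = \Vert \Sigma \Vert = \Phi(S(A))$.

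For the converse, suppose $\Phi$ is a symmetric gauge function and define $\Vert A \Vert := \Phi(S(A))$. Unitary invariance is immediate since $S(UAV) = S(A)$; homogeneity follows from $S(\lambda A) = |\lambda|\,S(A)$ together with $\Phi(|\lambda|\,x) = |\lambda|\,\Phi(x)$; and definiteness holds because $\Phi$ is a vector norm and $S(A) = 0$ exactly when $A = 0$. The real work is the triangle inequality $\Phi(S(A+B)) \le \Phi(S(A)) + \Phi(S(B))$, and here I would invoke the two pillars of the theory. The first is that a symmetric gauge function is monotone with respect to weak majorization: if the partial sums of the nonincreasing rearrangement of $x$ are dominated by those of $y$, then $\Phi(x) \le \Phi(y)$; this reduces, via the permutation and sign symmetries already established, to monotonicity of $\Phi$ on the nonnegative orthant. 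The second is Ky Fan's inequality $\sum_{i=1}^{k}\sigma_i(A+B) \le \sum_{i=1}^{k}\sigma_i(A) + \sum_{i=1}^{k}\sigma_i(B)$ for every $k$, which states precisely that $S(A+B)$ is weakly majorized by $S(A) + S(B)$. Chaining the two gives $\Phi(S(A+B)) \le \Phi(S(A)+S(B)) \le \Phi(S(A)) + \Phi(S(B))$, the last step using the triangle inequality for the vector norm $\Phi$.

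The main obstacle is concentrated in this converse: the triangle inequality requires the majorization machinery, namely Ky Fan's subadditivity of the leading singular-value sums and the monotonicity of symmetric gauge functions under weak majorization. Once these are in hand the proof is a short chain of inequalities, but they are the substantive ingredients and are the part I would either justify in detail or cite from the matrix-analysis literature.
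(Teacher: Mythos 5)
Your proposal is essentially correct and follows the standard textbook proof of von Neumann's theorem. Note, however, that the paper itself gives \emph{no} proof of this statement: it is quoted as a known result and attributed to von Neumann via a citation, so there is no in-paper argument to compare yours against. Your two directions are the right ones, and you correctly isolate the two substantive lemmas for the converse (Ky Fan's weak-majorization inequality $\sum_{i=1}^{k}\sigma_i(A+B)\le\sum_{i=1}^{k}\sigma_i(A)+\sum_{i=1}^{k}\sigma_i(B)$ and the monotonicity of symmetric gauge functions under weak majorization). One small compression to be aware of: the monotonicity under weak majorization does not reduce \emph{only} to componentwise monotonicity on the nonnegative orthant; one also needs the Hardy--Littlewood--P\'olya/Birkhoff step that if $x$ is (strongly) majorized by $y$ then $x$ lies in the convex hull of the coordinate permutations of $y$, after which permutation invariance and convexity of $\Phi$ finish the job, while the passage from weak to strong majorization uses componentwise monotonicity. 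Also, since the paper's Definition~\ref{matrix_norm} of a matrix norm includes submultiplicativity, a fully rigorous version matching the paper's conventions would additionally verify $\Phi(S(AB))\le\Phi(S(A))\,\Phi(S(B))$ (which holds under the usual normalization $\Phi(e_1)\ge 1$); this is routinely omitted in statements of von Neumann's theorem, as it is here.
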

The spectral norm is a unitarily invariant norm and its corresponding symmetric gauge function is an infinite norm, i.e.
\begin{equation}
\Phi_{\infty}(x) = max \lbrace \vert x_{1} \vert ,  \vert x_{2} \vert, \dots, \vert x_{n} \vert \rbrace.
\end{equation}
The unit ball of the infinite norm is a hypercube
\begin{equation}
\mathcal{B}_{\infty}(n) = \lbrace x \in \mathbb{R}^{n}: \Phi_{\infty}(x) \leq 1\rbrace = \left[ -1,1 \right]^{n}.    
\end{equation}
The Von Neuman's relation between unitary invariant norms and symmetric gauge functions is also reflected in the geometry of the corresponding unit balls. The characterization of the extreme points and facial structure of unit balls of unitarily invariant norms by the corresponding structure of unit balls of symmetric gauge functions have been studied in \cite{ZIETAK198857, so_1990, DESA1994349, DESA1994429, DESA1994451}. Faces and extreme points are defined as follows \cite{ROCKAFELLAR, fund_of_ca}
\begin{definition}
Let $C \in \mathbb{R}^{n}$ be a convex set. A convex set $F \subseteq C$ is called a face of $C$ if for every $x \in F$ and every $y,z \in C$ such that $x \in (y,z)$, we have $y,z \in F$. 
\end{definition}
\begin{definition}
The zero dimensional faces of a convex set $C$ are called extreme points of $C$. Thus a point $x \in C$ is an extreme point of $C$ if and only if there is no way to express $x$ as a convex combination $(1-\lambda)y + \lambda z$ such that $y,z \in C$ and $0<\lambda <1$, except by taking $x=y=z$.  
\end{definition}
It appears that the extreme points of the $\mathcal{B}(n)$ are exactly unitary matrices. This result has also been obtained in a more general setting by Stoer \cite{Stoer:1964}. The facial structure of the $\mathcal{B}(n)$ is given in the following theorem \cite{so_1990,DESA1994451} 
\begin{theorem}
$\mathcal{F}$ is a face of $\mathcal{B}(n)$ if and only if there exist $0 \leq r \leq n$ and unitary matrices $U$ and $V$ such that
\begin{equation}
\label{u_ball_faces}
\mathcal{F} = \lbrace
U
\left(
\begin{array}{cc}
I_{r} & 0 \\
0 & A
\end{array}
\right)
V:
A \in \mathcal{B}(n-r) \rbrace.
\end{equation}
\end{theorem}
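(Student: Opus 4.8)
The plan is to prove both implications with the singular value decomposition as the main tool, together with two facts already recorded: the extreme points of $\mathcal{B}(n)$ are precisely the unitary matrices, and for fixed unitaries $U,V$ the map $M\mapsto UMV$ is a spectral-norm isometry of $\mathbb{C}^{n\times n}$, hence it carries $\mathcal{B}(n)$ onto itself and faces bijectively onto faces. I will repeatedly use that $\Vert M\Vert\le 1$ is equivalent to the Loewner inequalities $MM^{*}\le I_{n}$ and $M^{*}M\le I_{n}$, and that any principal compression of a negative semidefinite matrix is again negative semidefinite. Writing $I_{r}\oplus A$ for the block-diagonal matrix with blocks $I_{r}$ and $A$, I abbreviate the candidate set as $\mathcal{F}_{r,U,V}=\{U(I_{r}\oplus A)V:A\in\mathcal{B}(n-r)\}$.

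For the \emph{if} direction I would first prove the key block lemma: if $M\in\mathcal{B}(n)$ is partitioned as $\left(\begin{smallmatrix}B&C\\D&E\end{smallmatrix}\right)$ with $B$ an $r\times r$ \emph{unitary} block, then $C=0$ and $D=0$. Indeed, the top-left $r\times r$ block of $MM^{*}\le I_{n}$ reads $BB^{*}+CC^{*}\le I_{r}$, and since $BB^{*}=I_{r}$ this forces $CC^{*}\le 0$, so $C=0$; the analogous compression of $M^{*}M\le I_{n}$ gives $B^{*}B+D^{*}D\le I_{r}$ and hence $D=0$. To show $\mathcal{F}_{r,U,V}$ is a face, by unitary invariance I may conjugate away $U$ and $V$ and assume $U=V=I_{n}$; the set is then convex (an affine image of $\mathcal{B}(n-r)$) and contained in $\mathcal{B}(n)$. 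If $x=I_{r}\oplus A$ satisfies $x=(1-\lambda)y+\lambda z$ with $y,z\in\mathcal{B}(n)$ and $0<\lambda<1$, comparing top-left blocks gives $I_{r}=(1-\lambda)B_{y}+\lambda B_{z}$ with $\Vert B_{y}\Vert,\Vert B_{z}\Vert\le 1$; since $I_{r}$ is unitary, hence an extreme point of $\mathcal{B}(r)$, we get $B_{y}=B_{z}=I_{r}$. The block lemma then forces $y=I_{r}\oplus A_{y}$ and $z=I_{r}\oplus A_{z}$ with $A_{y},A_{z}\in\mathcal{B}(n-r)$, so $y,z\in\mathcal{F}_{r,I,I}$, proving the face property.

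For the \emph{only if} direction I would use that in a compact convex set the relative interiors of the faces partition the set, so each point lies in the relative interior of a unique smallest face. Let $\mathcal{F}$ be a face and pick $M$ in its relative interior. Take a singular value decomposition $M=U\,\Sigma\,V$ with $U,V$ unitary and $\Sigma=\mathrm{diag}(\sigma_{1},\dots,\sigma_{n})$, $\sigma_{1}\ge\cdots\ge\sigma_{n}$, and let $r$ be the number of singular values equal to $1$, so that $\Sigma=I_{r}\oplus\Sigma'$ with $\Vert\Sigma'\Vert<1$ (taking $r=n$ if $M$ is unitary). Then $M\in\mathcal{F}_{r,U,V}$, which is a face by the previous step, and under the affine isomorphism $\mathcal{F}_{r,U,V}\cong\mathcal{B}(n-r)$ the point $M$ corresponds to $\Sigma'$; since $\Vert\Sigma'\Vert<1$, $\Sigma'$ lies in the interior of $\mathcal{B}(n-r)$, so $M$ lies in the relative interior of $\mathcal{F}_{r,U,V}$. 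Hence $\mathcal{F}$ and $\mathcal{F}_{r,U,V}$ are both the smallest face containing $M$, whence $\mathcal{F}=\mathcal{F}_{r,U,V}$, which is exactly the claimed form.

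The main obstacle is this \emph{only if} direction: the face $\mathcal{F}$ is given only abstractly, and the work lies in extracting the correct integer $r$ and unitaries $U,V$ and then certifying that $\mathcal{F}_{r,U,V}$ is not merely \emph{a} face through $M$ but the \emph{same} face as $\mathcal{F}$. The clean route is the relative-interior/smallest-face machinery, which also requires knowing that a face in the abstract sense is closed and has nonempty relative interior so that such a point $M$ exists, and that the number of unit singular values is constant on the relative interior of a face. Alternatively, the statement can be deduced from Von Neumann's theorem by transporting the facial structure of the hypercube $[-1,1]^{n}$ (whose faces freeze some coordinates at $\pm 1$) to $\mathcal{B}(n)$, as in the cited references; there the delicate point is handling repeated singular values and the attendant unitary freedom, precisely the role played by $U$ and $V$ above.
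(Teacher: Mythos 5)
Your proof is correct, but it cannot be compared to a proof in the paper because the paper does not prove this theorem: it is imported wholesale from \cite{so_1990,DESA1994451}, and those references obtain it by transporting the facial structure of the unit ball of the associated symmetric gauge function (the hypercube $[-1,1]^{n}$ for $\Phi_{\infty}$) to $\mathcal{B}(n)$ via von Neumann's correspondence \cite{vonneumann_1937}, in the spirit of \cite{ZIETAK198857}. Your route is genuinely different and more elementary: the \emph{if} direction rests on the block lemma (a unitary corner of a contraction forces the adjacent blocks to vanish, via compressions of $MM^{*}\le I$ and $M^{*}M\le I$) together with the extremality of $I_{r}$ in $\mathcal{B}(r)$, and the \emph{only if} direction on the SVD plus the standard fact \cite{ROCKAFELLAR} that relative interiors of nonempty faces partition a convex set, so that $\mathcal{F}$ and $\mathcal{F}_{r,U,V}$, having a common relative-interior point $M$, must coincide. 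Both halves are sound as written (the affine isomorphism $A\mapsto U(I_{r}\oplus A)V$ does carry the open ball $\Vert A\Vert<1$ onto $ri(\mathcal{F}_{r,U,V})$, and the degenerate cases $r=0$ and $r=n$ come out right), modulo the tacit restriction to nonempty faces. What the literature's gauge-function route buys is generality --- one classification covering all unitarily invariant norms at once, with the combinatorics of repeated singular values handled uniformly; what your route buys is a self-contained argument specific to the spectral norm that uses nothing beyond the SVD, positive-semidefinite compressions, and Rockafellar's partition theorem, and that makes the correspondence between $r$ and the number of unit singular values (the physically relevant quantity for $\Omega_{1},\dots,\Omega_{4}$) completely explicit.
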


As the $\Omega$ region in Eq.~(\ref{omega_intersection}) is a subset of $\mathcal{B}(3)$, its geometric structure is inherited from $\mathcal{B}(3)$. Thus, the facial structure of the $\Omega$ region is the same as for $\mathcal{B}(3)$ with restriction of parameters of unitary matrices $U$ and $V$ to experimental results in Eq.~\eqref{pmns_limits} and with established ranges of singular values 

\begin{equation}
\begin{split}
& \text{CP invariant scenario: }\\ 
&\lbrace \sigma_{1} = 0.95954, \sigma_{2} = 0.88186, \sigma_{3} = 0.84189 \rbrace, \\
& \text{General scenario: }\\ 
&\lbrace \sigma_{1} = 0.95592, \sigma_{2} = 0.84112, \sigma_{3} = 0.70275 \rbrace.   
\end{split}
\label{sing_val_phys}
\end{equation}

The faces of $\mathcal{B}(3)$ defined in Eq.~ \eqref{u_ball_faces} do not correspond entirely to physically interesting subsets in Eqs.~\eqref{omega_1}-\eqref{omega_3} of $\Omega$. Namely, higher-dimensional faces contain lower-dimensional faces, e.g. for $r=1$ the face contains not only matrices with two singular values strictly less than one, but also unitary matrices and contractions with only one singular value strictly less than one. In other words faces of $\mathcal{B}(3)$ comprise matrices from different subsets of $\Omega$. To restrict faces to subsets containing only matrices with the specific number of singular values strictly less than one, we can use the notion of the relative interior \cite{ROCKAFELLAR}.
\begin{definition}
The relative interior of a convex set $C \subset \mathbb{R}^{n}$, which is denoted by $ri(C)$, is defined as the interior which results when $C$ is regarded as a subset of its affine hull $\text{aff}(C)$.
\end{definition}
\noindent In this definition by the affine hull of the set $C$ we understand the set of all finite affine combinations of elements of $C$ \cite{leonard:2015}, i.e. $\text{aff}(C) = \lbrace \sum_{i=1}^{k} \alpha_{i} x_{i} : x_{i} \in C, \sum_{i=1}^{k} \alpha_{i} = 1  \rbrace$. 
In that way, the subsets of $\mathcal{B}(n)$ corresponding to different minimal unitary extensions are the relative interiors of $\mathcal{F}$, i.e. subsets of faces for which singular values of the $A$ submatrix are strictly smaller than one.

\begin{definition}
The subsets $\Omega_{1},\dots,\Omega_{4}$ of the $\Omega$ region are relative interiors of the faces $\mathcal{F}$ of $\mathcal{B}(3)$ for $r=2,1,0,3$, respectively, with parameters of unitary matrices $U$ and $V$ restricted by experimental data and with allowed ranges of singular values.
\end{definition}

There is another way to characterize subsets of the $\Omega$ region, namely in terms of Ky-Fan k-norms.
\begin{definition}
For a given matrix $A \in \mathbb{C}^{n\times n}$ the Ky-Fan k-norm is defined as the sum of $k$ largest singular values
\begin{equation}
\Vert A \Vert_{k} = \sum_{i=1}^{k}\sigma_{i}(A), \text{  for k=1,\dots,n}.    
\end{equation}
\end{definition}
In particular for matrices in $\mathbb{C}^{3 \times 3}$ the three possible Ky-Fan norms are
\begin{equation}
\begin{split}
&\Vert A \Vert_{1} = \sigma_{1}(A) \text{  (spectral norm)}, \\
&\Vert A \Vert_{2} = \sigma_{1}(A) + \sigma_{2}(A), \\
&\Vert A \Vert_{3} = \sigma_{1}(A) + \sigma_{2}(A) + \sigma_{3}(A) \text{  (nuclear norm)}.
\end{split}
\end{equation}
Let us define for $k=1,\dots,3$ the following sets
\begin{equation}
\begin{split}
&S_{k}(r)=\lbrace A \in \mathbb{C}^{n\times n}: \Vert A \Vert_{k} = r  \rbrace,  \\
&A_{k}(r_{1},r_{2}) = \lbrace A \in \mathbb{C}^{n\times n}: r_{1} \leq \Vert A \Vert_{k} < r_{2}  \rbrace,
\end{split}    
\end{equation}
i.e. we defined the sphere of radius $r$ and the annulus with radii $r_{1}$ and $r_{2}$ for Ky-Fan norms centered at the origin. Then, the subsets of the $\Omega$ can be defined as
\begin{equation}
\begin{split}
&\Omega_{1} = S_{1}(1) \cap S_{2}(2) \cap A_{3}(2+\sigma_{3min},3), \\
&\Omega_{2} = S_{1}(1) \cap A_{2}(1+\sigma_{2min},2) \cap A_{3}(1+\sigma_{2min}+\sigma_{3min},3), \\
&\Omega_{3} = A_{1}(\sigma_{1min},1) \cap A_{2}(\sigma_{1min}+\sigma_{2min},2) \; \cap  \\ & \;\;\;\;\;\;\;\;\;\;\; A_{3}(\sigma_{1min}+\sigma_{2min}+\sigma_{3min},3), \\
&\Omega_{4} = S_{1}(1) \cap S_{2}(2) \cap S_{3}(3),
\end{split}    
\end{equation}
where $\sigma_{i min}$ for $i=1,2,3$ are lower limits of singular values allowed by current experimental data \eqref{sing_val_phys}.
 
We have described the subsets of the $\Omega$ region corresponding to a different number of additional neutrinos as relative interiors of the unit ball of the spectral norm with restricted range of parameters. In this way we established a correspondence between the geometry of $\Omega$ and $\mathcal{B}(3)$. This will allow us to study the further properties of the region of physically admissible mixing matrices by geometric tools used for unit balls of matrix norms and gauge symmetric functions.

\section{Physically admissible mixing matrices as a convex combination of PMNS matrices}
The $\Omega$ region (\ref{eq:omegaconv}) is defined as the convex hull of unitary \texttt{PMNS} mixing matrices or equivalently as the set of a finite convex combination of \texttt{PMNS} mixing matrices. In this way, we claim that every physically admissible mixing matrix can be represented as a fine convex combination of unitary \texttt{PMNS} matrices. This agrees with the Krein-Milman theorem which states \cite{Krein1940,leonard:2015}

\begin{theorem}
Let $C \subset \mathbb{R}^{n}$ be a nonempty compact convex set, and let $ext(C)$ be the set of extreme points of $C$, then
\begin{equation}
C = \overline{conv}(ext(C)),    
\end{equation}
\end{theorem}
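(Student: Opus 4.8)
The plan is to prove the two inclusions separately. The inclusion $\overline{conv}(ext(C)) \subseteq C$ is immediate: one has $ext(C) \subseteq C$, and $C$ is convex and closed (being compact in $\mathbb{R}^{n}$), so the smallest closed convex set containing $ext(C)$ lies inside $C$. The whole content of the theorem is therefore the reverse inclusion $C \subseteq \overline{conv}(ext(C))$, which I would attack by a strict-separation argument combined with the existence of extreme points on exposed faces.

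First I would isolate two auxiliary facts. \emph{(i) Transitivity of faces:} if $F$ is a face of $C$ and $e$ is an extreme point of $F$, then $e \in ext(C)$. This follows directly from the definition of a face given above: if $e \in (y,z)$ with $y,z \in C$, then $y,z \in F$ because $F$ is a face of $C$, and then $y = z = e$ because $\{e\}$ is a face of $F$. \emph{(ii) Existence of extreme points:} every nonempty compact convex set $K \subseteq \mathbb{R}^{n}$ has at least one extreme point. I would prove (ii) by induction on $\dim(\text{aff}(K))$. The base case $\dim = 0$ is a single point, which is its own extreme point. For the inductive step, choose a linear functional $g$ that is nonconstant on $K$; its maximum over $K$ is attained by compactness, and the argmax set $F = \{x \in K : g(x) = \max_{K} g\}$ is a nonempty compact convex subset of $K$ of strictly smaller affine dimension. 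A short computation shows $F$ is a face of $K$ (if $x = (1-\lambda)y + \lambda z \in F$ with $y,z \in K$ and $0 < \lambda < 1$, the maximality of $g$ at $x$ forces $g(y) = g(z) = \max_{K} g$, so $y,z \in F$). By induction $F$ has an extreme point $e$, and by (i) this $e$ is an extreme point of $K$.

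For the main argument, suppose toward a contradiction that $C \not\subseteq K := \overline{conv}(ext(C))$, so there is some $x_{0} \in C \setminus K$. By (ii) applied to $C$ itself, $ext(C) \neq \emptyset$, so $K$ is a nonempty compact convex set. By the separating hyperplane theorem in $\mathbb{R}^{n}$ there is a linear functional $f$ with $f(x_{0}) > \sup_{y \in K} f(y)$. Let $m = \max_{x \in C} f(x)$, attained by compactness, and put $F = \{x \in C : f(x) = m\}$, which as above is a nonempty compact convex face of $C$. By (ii) it has an extreme point $e$, and by (i) $e \in ext(C) \subseteq K$. But then $m = f(e) \leq \sup_{y \in K} f(y) < f(x_{0}) \leq m$, a contradiction. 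Hence $C \subseteq K$, and together with the first inclusion this gives $C = \overline{conv}(ext(C))$.

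The only genuinely nontrivial ingredient, and the step I expect to be the main obstacle, is the existence of extreme points (ii); the separation step is standard finite-dimensional convexity. I would also remark that in $\mathbb{R}^{n}$ the closure is in fact superfluous: since $ext(C)$ is a bounded set and, by Carath\'eodory's theorem, $conv(ext(C))$ is the continuous image of a compact set, it is already compact and hence closed, yielding the sharper $C = conv(ext(C))$. Since the theorem is stated here with the closure (the general Banach-space formulation), the separation proof above suffices as written.
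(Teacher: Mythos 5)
The paper does not prove this statement at all: it is the Krein--Milman theorem, quoted with citations to the literature, so there is no in-paper argument to compare against. Your proof is the standard finite-dimensional one and it is correct. The easy inclusion, the transitivity of extreme points through faces, the inductive existence of extreme points via argmax faces of strictly smaller affine dimension, and the strict-separation contradiction are all carried out properly; in particular you correctly note that $K=\overline{conv}(ext(C))$ is nonempty and compact (it sits inside $C$ by the easy inclusion), which is what licenses the separation step. The one inaccuracy is in your closing remark: the claim that $conv(ext(C))$ is automatically compact because it is ``the continuous image of a compact set'' presupposes that $ext(C)$ is closed, which can fail for compact convex sets in $\mathbb{R}^{n}$ with $n\geq 3$ (the classical example is the convex hull of a circle together with a segment through one of its points, orthogonal to its plane). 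The sharper conclusion $C=conv(ext(C))$ is nevertheless true in finite dimensions (Minkowski's theorem), but it requires a dimension-induction argument rather than the compactness-of-image shortcut. Since you explicitly do not rely on that remark for the theorem as stated, the main proof stands as written.
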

As we have discussed, the extreme points of the unit ball of the spectral norm are unitary matrices and as the $\Omega$ is a subset of the $\mathcal{B}(3)$, the above theorem justifies our definition. However, this theorem does not put any restriction on the number of extreme points necessary to construct any point of a convex set as a convex combination of its extreme points. The upper bound for this number has been given by Carath\'eodory \cite{CarathodoryberDV,leonard:2015}
\begin{theorem}
If $K \subset \mathbb{R}^{n}$, then each point of $conv(K)$ is a convex combination of at most $n+1$ points of $K$. 
\end{theorem}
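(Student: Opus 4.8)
The plan is to start from the definition of the convex hull and then iteratively reduce the number of points in a representation until the bound $n+1$ is reached. First I would fix an arbitrary point $x \in conv(K)$; by definition this means there exist finitely many points $x_1,\dots,x_m \in K$ and coefficients $\lambda_i > 0$ with $\sum_{i=1}^m \lambda_i = 1$ such that $x = \sum_{i=1}^m \lambda_i x_i$. The goal is to show that if $m > n+1$, we can always rewrite $x$ using strictly fewer points of $K$, and then invoke finite descent to conclude that a representation with at most $n+1$ points exists.

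The engine of the reduction is \emph{affine dependence}. If $m > n+1$, then the $m-1$ vectors $x_2 - x_1,\dots,x_m - x_1$ live in $\mathbb{R}^n$ but number more than $n$, so they are linearly dependent. I would extract scalars $\mu_2,\dots,\mu_m$, not all zero, with $\sum_{i=2}^m \mu_i(x_i - x_1) = 0$, and then set $\mu_1 := -\sum_{i=2}^m \mu_i$. This yields a nontrivial relation
\begin{equation}
\sum_{i=1}^m \mu_i x_i = 0, \qquad \sum_{i=1}^m \mu_i = 0,
\end{equation}
in which at least one $\mu_i$ is strictly positive (since the $\mu_i$ sum to zero but are not all zero).

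Next I would perturb the convex combination along this relation. For a real parameter $t$, the coefficients $\lambda_i - t\mu_i$ still sum to $1$ and still represent $x$, because subtracting $t\sum_i \mu_i x_i = 0$ changes nothing. The idea is to increase $t$ from $0$ until the first coefficient hits zero, that is, to take $t = \min\{\lambda_i/\mu_i : \mu_i > 0\}$, attained at some index $j$. For indices with $\mu_i > 0$ the new coefficient $\lambda_i - t\mu_i$ is nonnegative by minimality (and vanishes at $i=j$), while for indices with $\mu_i \le 0$ it is at least $\lambda_i \ge 0$; hence all new coefficients are nonnegative, they sum to one, and the $j$-th has been eliminated. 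This expresses $x$ as a convex combination of $m-1$ points of $K$.

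The step I expect to be the main obstacle is the parameter selection: one must verify carefully that the particular choice of $t$ simultaneously preserves nonnegativity of \emph{all} coefficients and forces exactly one of them to vanish, which is precisely where the sign analysis on the two groups $\mu_i > 0$ and $\mu_i \le 0$ is essential. Once this one-step reduction is established, the final argument is immediate: since each step strictly decreases the number of points while maintaining a valid convex representation, repeating it terminates with a representation using at most $n+1$ points, as claimed.
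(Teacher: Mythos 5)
Your proof is correct and complete: it is the classical proof of Carath\'eodory's theorem via affine dependence of more than $n+1$ points, perturbation of the coefficients along the resulting relation $\sum_i\mu_i x_i=0$, $\sum_i\mu_i=0$, and the choice $t=\min\{\lambda_i/\mu_i:\mu_i>0\}$ to kill one coefficient while keeping the rest nonnegative, followed by finite descent. The paper does not prove this statement at all --- it only quotes it with references to Carath\'eodory and a textbook --- so there is no in-paper argument to compare yours against; your write-up supplies exactly the standard proof those sources contain.
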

The natural question arises: What is the Carath\'eodory number for the $\mathcal{B}(n)$ and $\Omega$ which are embedded in $\mathbb{C}^{n^2} \simeq \mathbb{R}^{2n^2}$? In the physically interesting case where $n=3$ according to the Carath\'eodory theorem we would need $19$ unitary matrices. However, we will prove that for $\mathcal{B}(3)$ this number can be significantly reduced.

\begin{prop}
The Carath\'eodory's number for the \\
$conv(\mathcal{U}(3))$=$\mathcal{B}(3)$ is $4$.
\end{prop}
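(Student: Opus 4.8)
The plan is to exploit the unitary invariance of the spectral norm to reduce the statement to its diagonal, singular-value form, and then to transfer the counting question to the unit ball of the associated symmetric gauge function, which by Von Neumann's theorem is the cube $[-1,1]^3$. Concretely, I would write an arbitrary $A\in\mathcal{B}(3)$ through its singular value decomposition $A = U\Sigma V^{*}$ with $\Sigma = \mathrm{diag}(\sigma_1,\sigma_2,\sigma_3)$ and $0\le\sigma_i\le 1$, and observe that the map $W\mapsto UWV^{*}$ is a bijection of $\mathcal{U}(3)$ onto itself which sends a convex combination to a convex combination with the same number of terms. Hence it suffices to determine the minimal number of unitary matrices needed to represent the diagonal contraction $\Sigma$, and the whole problem becomes one about the point $(\sigma_1,\sigma_2,\sigma_3)$ sitting inside the cube.

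For the upper bound I would use the vertices of $\mathcal{B}_{\infty}(3)=[-1,1]^3$, namely the eight sign vectors $(\pm 1,\pm 1,\pm 1)$, which lift to the diagonal signature matrices $\mathrm{diag}(\pm 1,\pm 1,\pm 1)$; these are orthogonal, hence unitary, and are extreme points of $\mathcal{B}(3)$. Applying Carath\'eodory's theorem in $\mathbb{R}^3$ to the cube shows that every point is a convex combination of at most $3+1=4$ of its vertices, so $\Sigma=\sum_{i=1}^{4}\alpha_i S_i$ with $S_i$ signature matrices and $\alpha_i\ge 0$, $\sum_i\alpha_i=1$. Conjugating back gives $A=\sum_{i=1}^{4}\alpha_i\,(US_iV^{*})$, a convex combination of at most four unitary matrices, which establishes that four always suffice.

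To show that four are also necessary I would exhibit a witness in the relative interior of $\mathcal{B}(3)$, i.e. a matrix all of whose singular values are strictly less than one (the $\Omega_3$, or $r=0$, case of the facial structure theorem), which corresponds to a point in the open interior of $[-1,1]^3$. The intended route to the lower bound is a face/dimension count: a point whose free coordinates fill the relative interior of a $k$-dimensional face of the cube requires at least $k+1$ vertices to represent it, so a fully interior point of the three-dimensional cube forces $3+1=4$ terms; restricting to faces with one or two coordinates pinned at the boundary reproduces the values $2$ and $3$ for the $\Omega_1$ and $\Omega_2$ scenarios. I expect the genuine obstacle to be precisely here: the upper-bound argument only controls decompositions assembled from the cube's vertices, whereas a true lower bound must rule out \emph{every} convex combination of three arbitrary unitary matrices summing to such a generic, fully non-unitary target. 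Making this last reduction rigorous — verifying that passing from the signature matrices to arbitrary elements of $\mathcal{U}(3)$ cannot lower the count — is the delicate step, in contrast to the essentially combinatorial upper bound.
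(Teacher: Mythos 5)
Your upper-bound argument coincides with the paper's proof almost verbatim: reduce via the singular value decomposition to a diagonal contraction, lift the vertices of the cube $[-1,1]^{3}$ to the signature matrices $\mathrm{diag}(\pm 1,\pm 1,\pm 1)$, apply Carath\'eodory's theorem in $\mathbb{R}^{3}$ to get at most four terms, and conjugate back by $U(\cdot)V^{*}$. The paper's proof stops there; it makes no attempt at a lower bound.

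The gap you flag in your final paragraph is therefore the substantive issue, and you are right to be suspicious of it --- but the situation is worse than ``delicate'': the passage from arbitrary unitaries back to signature (or even diagonal) matrices genuinely loses information, and the lower bound is false over $\mathbb{C}$. Writing $A = W\Sigma Q^{\dagger}$ with $\Sigma = \mathrm{diag}(\sigma_{1},\sigma_{2},\sigma_{3})$, set $V = \Sigma + i\sqrt{I-\Sigma^{2}}$; this is a diagonal unitary matrix with $\frac{1}{2}\left(V+V^{\dagger}\right) = \Sigma$, so
\begin{equation*}
A = \tfrac{1}{2}\bigl(WVQ^{\dagger} + WV^{\dagger}Q^{\dagger}\bigr)
\end{equation*}
exhibits \emph{every} $3\times 3$ contraction as the arithmetic mean of two unitary matrices. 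Consequently no point of $\mathcal{B}(3)$ requires more than two elements of $\mathcal{U}(3)$, the face/dimension count on the cube does not transfer to $\mathrm{conv}(\mathcal{U}(3))$ (triangles spanned by triples of signature matrices are far from the only low-dimensional simplices with unitary vertices), and the numbers $2$, $3$, $4$ attached to $\Omega_{1}$, $\Omega_{2}$, $\Omega_{3}$ are only upper bounds tied to this particular real, simultaneously-diagonal construction. Your proposed interior witness cannot certify that four terms are necessary; what both you and the paper actually establish is the bound ``at most four'' (with its refinements on the faces), and if the proposition is read as the exact Carath\'eodory number of $\mathrm{conv}(\mathcal{U}(3))$, that number is $2$. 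A genuine lower bound of $4$ could only survive in a setting where the two complex-conjugate unitaries above are unavailable, e.g.\ for the real ball $\tilde{\mathcal{B}}(3)$ and $\mathcal{O}(3)$.
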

\begin{proof}
Let $B_{\infty}= \lbrace x \in \mathbb{R}^{3} : \Vert x \Vert_{\infty} \leq 1 \rbrace$ be a unit ball of the infinite norm in $\mathbb{R}^{3}$, i.e. the cube $[-1,1]^{3}$. The extreme points of the $B_{\infty}$ are vertices of the cube, i.e. vectors $v_{j} = (\pm 1, \pm 1, \pm 1)^{T}$ for $j=1,\dots,8$. Let $\psi: B_{\infty} \rightarrow \mathbb{M}_{3 \times 3}$ be a mapping which sends a vector from the unit ball $B_{\infty}$ into the diagonal matrix. Then, the $\psi$ sends the extreme points of the $B_{\infty}$ into the diagonal unitary matrices $U_{j}=diag(\pm 1, \pm 1, \pm 1)$ for $j=1,\dots,8$. The Carath\'eodory's number for the cube is 4. Thus, every point in $B_{\infty}$ can be written as the convex combination of at most 4 extreme points $v_{j}$. In particular every point of the positive octant can be written in this way. This means that every diagonal matrix $D \in \mathbb{M}_{3 \times 3}$ with diagonal elements in $[0,1]$ can be written as convex combination of at most 4 diagonal unitary matrices $U_{j}$, i.e. $D= \sum_{i=1}^{4} \alpha_{i} U_{i}, \text{ with } \alpha_{i} \geq 0 \text{ and } \sum_{i=1}^{4}\alpha_{i} = 1$. Now, let $A$ be a contraction with a singular value decomposition $A=W D V^{\dag}$, where $W$ and $V$ are unitary matrices. This gives
\begin{equation}
A=W D V^{\dag} = \sum_{i=1}^{4} \alpha_{i} W U_{i} V^{\dag}.    
\end{equation}
As the $conv(U(3))$=$\mathcal{B}(3)$ is the set of all $3 \times 3$ contractions, this completes the proof. 
\end{proof}

As an immediate consequence of this proposition and the construction used in the proof, {\it matrices from the $\Omega_{2}$ subset, i.e. with two singular values strictly less than one, can be constructed as the convex combination of 3 unitary matrices}. {\it Whereas,  matrices from the $\Omega_{1}$ subset, i.e. with only one singular value strictly less than one, can be constructed as the convex combination of two unitary matrices}.

Following the idea of Stoer \cite{Stoer:1964}, we will show how to construct contractions with two and one singular values strictly less than one as a convex combination of three and two unitary matrices, respectively. Let us take the following diagonal matrix
\begin{equation}
D_{1} =
\left(
\begin{array}{ccc}
1 & 0 & 0 \\
0 & a & 0 \\
0 & 0 & b
\end{array}
\right),
\end{equation}
where $a,b<1$. It can be written as the following sum
\begin{equation}
\begin{split}
&D_{1} = \\
& \frac{1-a}{2}\left(
\begin{array}{ccc}
1 & 0 & 0 \\
0 & -1 & 0 \\
0 & 0 & -1
\end{array}
\right)
+
\frac{a-b}{2}\left(
\begin{array}{ccc}
1 & 0 & 0 \\
0 & 1 & 0 \\
0 & 0 & -1
\end{array}
\right)
+
\frac{1+b}{2}\left(
\begin{array}{ccc}
1 & 0 & 0 \\
0 & 1 & 0 \\
0 & 0 & 1
\end{array}
\right).
\end{split}
\end{equation}
Now let us take another diagonal matrix. This time with only one diagonal element strictly less than one
\begin{equation}
D_{2} =
\left(
\begin{array}{ccc}
1 & 0 & 0 \\
0 & 1 & 0 \\
0 & 0 & a
\end{array}
\right),
\end{equation}
where $a<1$. The $D_{2}$ matrix can be written as
\begin{equation}
D_{2}=
\frac{1-a}{2}\left(
\begin{array}{ccc}
1 & 0 & 0 \\
0 & 1 & 0 \\
0 & 0 & -1
\end{array}
\right)
+
\frac{1+a}{2}\left(
\begin{array}{ccc}
1 & 0 & 0 \\
0 & 1 & 0 \\
0 & 0 & 1
\end{array}
\right).
\end{equation}
Multiplying $D_{1}$ and $D_{2}$ matrices from left- and right-hand side by unitary matrices, we end up with a singular value decomposition of a given matrix with singular values gathered in $D_{1}$ and $D_{2}$, respectively.

{\it As a result contractions with two and one singular values strictly smaller than one can be written as convex combinations of unitary matrices with singular values encoded in coefficients of the combination.  
}

These results will be used for extending phenomenological studies on the light-heavy neutrino mixings undertaken in \cite{Flieger:2019eor} to the 3+2 and 3+3 scenarios.

\section{Volume}
Lie groups  are also manifolds \cite{tu2010introduction}, i.e. they pose geometric structures. Thus, we can associate with them geometrical properties such as the surface area, also called the volume. Two very important groups in physics fall into this category, namely an orthogonal group, and its complex counterpart: a unitary group. These groups are also very important in neutrino physics as the mixing matrix is either orthogonal or, if the CP phase is non-zero, unitary.
In Tab.~\ref{tab:volume} we gathered the list of structures for which we will calculate the volume in this section. The table is split into purely mathematical objects and those restricted by experiments.

\begin{table}
  \begin{center}
  \begin{tabular}{|c|c|}
    \hline
   CP-conserving & CP-violating\\
    \hline
\multicolumn{2}{|c|}{Total volumes} \\
   $\mathcal{SO}(3) \subset \mathcal{O}(3) \subset \tilde{\mathcal{B}}(3)$ &  $\mathcal{SU}(3) \subset \mathcal{U}(3) \subset \mathcal{B}(3)$ \\
\multicolumn{2}{|c|}{Experimentally restricted volumes} \\
$O_{\texttt{PMNS}} \subset \tilde{\Omega} = \tilde{\mathcal{B}}(3) \cap O_{int}$ & $U_{\texttt{PMNS}} \subset \Omega = \mathcal{B}(3) \cap U_{int}$ \\
    \hline    
  \end{tabular} 
  \caption{Total and experimentally restricted volumes for different structures considered in this work and their mutual relations.}
  \label{tab:volume}
  \end{center}
\end{table}

\subsection{CP-conserving case}

The set of all orthogonal matrices of dimension $n \times n$, i.e. $\mathcal{O}(n) = \lbrace O \in \mathbb{R}^{n \times n} : OO^{T} = I\rbrace$, is an example of a Stiefel manifold \cite{muirhead2005aspects}. As the orthogonal matrices have $\frac{n(n-1)}{2}$ independent parameters, the Stiefel manifold of the orthogonal group is a $\frac{n(n-1)}{2}$ dimensional manifold embedded in $n^{2}$ space. We can associate to it a volume which is expressed as the Haar measure over the orthogonal group \cite{muirhead2005aspects,Marinov_1980,Marinov_1981,Boya_2003,zhang2017volumes}
\begin{equation}
vol(\mathcal{O}(n)) = \int_{\mathcal{O}(n)} [O^{T}dO]^{\wedge},
\end{equation}
where $[O^{T}dO]^{\wedge}$ denotes the wedge product of the matrix $O^{T}dO$ and $dO$ is the matrix of the differentials of the orthogonal matrix $O$. This volume can be expressed in the following compact form \cite{ponting,zhang2017volumes}
\begin{equation}
vol(\mathcal{O}(n)) = \frac{2^{n} \pi^{\frac{n^{2}}{2}}}{\Gamma_{n}(\frac{n}{2})} = 
\frac{2^{n} \pi^{\frac{n(n+1)}{4}}}{\prod_{k=1}^{n} \Gamma( \frac{k}{2})}, 
\end{equation}
where $\Gamma_{n}(x) = \pi^{\frac{n(n-1)}{4}} \Gamma(x) \Gamma\left(x-\frac{1}{2}\right) \dots \Gamma\left(x-\frac{n-1}{2}\right)$.
In the case interesting from the neutrino physics perspective, i.e. for $n=3$, this gives
\begin{equation}
vol(\mathcal{O}(3)) = 16 \pi^{2}.
\end{equation}
However, as the determinant of the \texttt{PMNS} mixing matrix is equal to 1, it belongs to even a smaller subset, namely the special orthogonal group $\mathcal{SO}(3)$. The special orthogonal group is a subgroup of $\mathcal{O}(3)$ and its volume is half of the volume of the orthogonal group, i.e.
\begin{equation}
vol(\mathcal{SO}(3)) = 8 \pi^{2}.    
\end{equation}
Moreover, the \texttt{PMNS} matrix does not cover the entire range of parameters and hence we must start from $O^{T}dO$ in order to calculate the volume of this submanifold. Taking the standard \texttt{PMNS} parametrization \eqref{pmns_para} we get
\begin{small}
\begin{equation}
\begin{split}
&O^{T}dO = \\
&\left(
\begin{array}{ccc}
0 & d\theta_{12} + s_{13}d\theta_{23} & c_{12}d\theta_{13} - c_{13}s_{12}d\theta_{23} \\
-d\theta_{12} - s_{13}d\theta_{23} & 0 & c_{13}c_{12}d\theta_{23} + s_{12}d\theta_{13} \\
-c_{12}d\theta_{13} + c_{13}s_{12}d\theta_{23} & -c_{13}c_{12}d\theta_{23} - s_{12}d\theta_{13} & 0
\end{array}   
\right).
\end{split}
\end{equation}
\end{small}
The wedge product of the independent elements of this matrix is equal to
\begin{equation}
[O^{T}dO]^{\wedge} = \cos(\theta_{13})d\theta_{12}d\theta_{13}d\theta_{23}.    
\end{equation}
Thus, the volume of \texttt{PMNS} matrices is given by
\begin{equation}
vol(\texttt{PMNS}) = \int_{\theta_{12_{L}}}^{\theta_{12_{U}}} \int_{\theta_{13_{L}}}^{\theta_{13_{U}}} \int_{\theta_{23_{L}}}^{\theta_{23_{U}}} = \cos(\theta_{13})d\theta_{12}d\theta_{13}d\theta_{23},
\end{equation}
which with the current experimental limits on $\theta_{12}, \theta_{13}$ and $\theta_{23}$ \eqref{pmns_limits} gives
\begin{equation}
vol(\texttt{PMNS}) =2.2667 \times 10^{-4}. 
\label{pmns_volume}
\end{equation}
As we can see the \texttt{PMNS} matrices contribute only in a small portion
to the entire $\mathcal{O}(3)$.

\subsection{CP-violating case}

The unitary group $\mathcal{U}(n)$, i.e the group of all unitary matrices $\mathcal{U}(n) =\lbrace U \in \mathbb{C}^{n \times n} : UU^{T} = I\rbrace$, is another example of Stiefel manifold. Similarly, as for the orthogonal group, the volume of the unitary group is given by the Haar measure over the group
\begin{equation}
vol(\mathcal{U}(n)) = \int_{U(n)} [U^{\dag}dU]^{\wedge},    
\end{equation}
where $[U^{\dag}dU]^{\wedge}$ denotes the exterior product of the matrix $U^{\dag}dU$ and $dU$ is the matrix of the differentials of the unitary matrix $U$. The volume of the unitary group can be expressed in a compact form as \cite{ponting,Marinov_1980,Marinov_1981,Boya_2003,zhang2017volumes}
\begin{equation}
vol(\mathcal{U}(n))=\frac{2^{n} \pi^{n^{2}}}{\tilde{\Gamma}_{n}(n)}=\frac{2^{n} \pi^{\frac{n(n+1)}{2}}}{1!2!\dots(n-1)!},   
\end{equation}
where $\tilde{\Gamma}_{n}(x) = \pi^{\frac{n(n-1)}{2}} \Gamma(x) \Gamma(x-1) \dots \Gamma(x -n +1)$.
Thus, the volume of the $3\times 3$ unitary matrices equals
\begin{equation}
vol(\mathcal{U}(3)) = 4 \pi^{6}.    
\end{equation}
Moreover, the determinant of the \texttt{PMNS} matrix is equal to one, which means it belongs to the special unitary group $\mathcal{SU}(n)$. The volume of the $\mathcal{SU}(n)$ written in the compact form is \cite{Marinov_1980,Marinov_1981,Boya_2003}
\begin{equation}
vol(\mathcal{SU}(n)) = \frac{2^{\frac{(n-1)}{2}}\pi^{\frac{(n-1)(n+2)}{2}}}{1!2!\dots(n-1)!}. 
\end{equation}
For the physically interesting dimension, i.e. $n=3$ this volume is equal to
\begin{equation}
vol(\mathcal{SU}(3)) = \sqrt{3} \pi^{5}.    
\end{equation}

The \texttt{PMNS} mixing matrix with non-zero CP phase, however, has a restricted set of parameters \eqref{dirac_para},\eqref{majorana_para} and   
ranges of these parameters are confined by experiments \eqref{pmns_limits}. Hence, in order to calculate its volume, it is necessary to start from a specific parametrization of the mixing matrix \eqref{pmns_para}. It can be done in the same way as for its real counterpart, i.e. by calculating the wedge product of the matrix $U^{\dag}dU$. However, for the complex matrices, it is much more complicated. Alternatively, it can be calculated by determining the Jacobian matrix of the \texttt{PMNS} matrix in the parametrization \eqref{pmns_para}
\begin{equation}
J = \left( \frac{\partial u_{ij}}{\partial y_{k}} \right), \quad i,j=1,\dots, n  
\end{equation}
and the $y_{k}$ are parameters (for the \texttt{PMNS} matrix $k=1,\dots,4$).
Then, the volume element is multiplied by the  Jacobian $\vert J \vert = \sqrt{det(\frac{1}{2}J^{\dag}J)}$.

The volume of complex \texttt{PMNS} matrices can be calculated in one more way, namely by using the Cartan-Killing metric \cite{kobayashi1996,mkrtchyan2013universal,Zyczkowski_2003}
\begin{equation}
ds^{2}=(V,V)dt^{2},    
\end{equation}
where $(A,B)=\frac{1}{2}Tr(A^{\dag}B)$ is the inner product induced by the Frobenius norm and $V=U^{\dag}dU$. The $V$ is anti-Hermitian and thus $(V,V)=\frac{1}{2}Tr(V^{\dag}V)=-\frac{1}{2}Tr(V^{2})$. 

The Hermitian product of the Jacobian matrix for the \texttt{PMNS} matrix is given by
\begin{equation}
\label{pmns_jacobian}
\frac{1}{2}J^{\dag}J=
\left(
\begin{array}{cccc}
1 & 0 & \sin(\theta_{13})\cos(\delta) & 0 \\
0 & 1 & 0 & 0 \\
\sin(\theta_{13})\cos(\delta) & 0 & 1 & 0 \\
0 & 0 & 0 & \sin^{2}(\theta_{13})
\end{array}
\right).
\end{equation}
Let us look also at the expression for the Cartan-Killing metric
\begin{equation}
ds^{2} = d\theta_{23}^{2} + d\theta_{13}^{2} + d\theta_{12}^{2} + 2\sin(\theta_{13})\cos(\delta) + \sin^{2}(\theta_{13})d\delta^{2}   
\end{equation}
which as expected gives the same matrix as \eqref{pmns_jacobian}. 
Finally, the Jacobian for the \texttt{PMNS} matrices is equal to
\begin{equation}
\vert J \vert = \sqrt{\sin^{2}(\theta_{13}) - \cos^{2}(\delta)\sin^{4}(\theta_{13})}.   
\end{equation}
Thus, the volume of the complex \texttt{PMNS} matrices is given by
\begin{equation}
\begin{split}
&vol(\texttt{PMNS}) = \int_{\texttt{PMNS}} \vert J \vert dV = \\
&=\int \sqrt{\sin^{2}(\theta_{13}) - \cos^{2}(\delta)\sin^{4}(\theta_{13})} d\theta_{23} d\theta_{13} d\theta_{12} d\delta.
\end{split}
\end{equation}
Taking into account current experimental limits for mixing parameters \eqref{pmns_limits} the numerical value for the volume of the complex \texttt{PMNS} mixing matrices is
\begin{equation}
vol(\texttt{PMNS}) = 1.4777 \times 10^{-4}.
\end{equation}
As in the CP conserving case we see that \texttt{PMNS} matrices constitute only a small portion of all unitary matrices.

\textit{This and CP invariant result \eqref{pmns_volume} show already the quality of the neutrino studies.}
However, comparing these results with the volume of quark mixing matrix which is equal to 
\begin{equation}
vol(\texttt{CKM})= 8.81 \times 10^{-14},
\end{equation} we can see that the quark mixing parameters are much more precise.  The \texttt{CKM} mixing matrix can be parametrized in the same way as $\texttt{PMNS}$ in Eq.~(\ref{pmns_para}), the exact values of the \texttt{CKM} parameters are taken from \cite{Zyla:2020zbs}.

\subsection{Scenarios with extra neutrino states}

So far we have established the volume of the neutrino mixing matrices only for the scenario with three known types of neutrinos. However, 
for scenarios with extra neutrino states, it is required to consider the entire $\Omega$ region and not only its extreme points represented by $U_{\texttt{PMNS}}$.
In order to do this, we will use the fact that the region of all physically admissible mixing matrices is a subset of the unit ball in the spectral norm $\mathcal{B}(n) = \lbrace A \in \mathbb{C}^{n \times n}: \Vert A \Vert \leq 1\rbrace$ and for the CP conserving case it is restricted to the real matrices $\tilde{\mathcal{B}}(n) = \lbrace A \in \mathbb{R}^{n \times n}: \Vert A \Vert \leq 1\rbrace$.
Volumes of the $\mathcal{B}(n)$ and $\tilde{\mathcal{B}}(n)$  can be calculated from the singular value decomposition. The differential of the singular value decomposition of a given matrix $A=U\Sigma V^{\dag}$ is equal to
\begin{equation}
dA = dU \Sigma V^{\dag} + U d\Sigma V^{\dag} + U \Sigma dV^{\dag}.     
\end{equation}
By multiplying this from the left-hand side by $U^{\dag}$ and from the right-hand side by $V$ we get
\begin{equation}
dX \equiv U^{\dag}dA V = U^{\dag} dU \Sigma + d\Sigma + \Sigma dV^{\dag} V,    
\end{equation}
which can be rewritten using $dV^{\dag} V = -V^{\dag}dV$ in the following form
\begin{equation}
dX = U^{\dag}dA V = U^{\dag} dU \Sigma + d\Sigma - \Sigma V^{\dag}dV.    
\label{haar_svd}
\end{equation}
The Haar measure is left- and right-invariant, thus $[dA]^{\wedge} = [U^{\dag}dA V]^{\wedge} = [dX]^{\wedge}$. The entrywise analysis of the $dX$ in the CP invariant scenario gives
\begin{equation}
[dX]^{\wedge} = \prod_{i<j} \vert \sigma_{j}^2 - \sigma_{i}^2 \vert \wedge_{i=1}^{n}d\sigma_{i} [O^{T} dO]^{\wedge} [Q^{T} dQ]^{\wedge}.   
\end{equation}
Hence, the volume of the unit ball of the spectral norm in a real case is given by
\begin{equation}
vol(\tilde{\mathcal{B}}(n))=\frac{1}{2^{n}n!} vol(\mathcal{O}(n))^{2} \int_{0}^{1} \prod_{i<j} \vert \sigma_{j}^2 - \sigma_{i}^2 \vert \prod_{k=1}^{n}d\sigma_{k}. 
\label{vol_u_ball_real}
\end{equation}
The inclusion of the factor $\frac{1}{2^{n}}$ assures the uniqueness of the singular value decomposition. 
For the physically interesting dimension $n=3$, we have
\begin{equation}
vol(\tilde{\mathcal{B}}(3)) = \frac{8 \pi^{4}}{45}.    
\label{u_ball_real}
\end{equation}

Similar entrywise analysis of the \eqref{haar_svd} provides the volume element for the singular value decomposition of complex matrices
\begin{equation}
[dX]^{\wedge} = \prod_{i=1}^{n} \sigma_{i} \prod_{i<j} \vert \sigma_{j}^2 - \sigma_{i}^2 \vert^{2} \wedge_{i=1}^{n}d\sigma_{i} [U^{\dag} dU]^{\wedge} [V^{\dag} dV]^{\wedge}   
\end{equation}
Thus, the volume of the unit ball of the spectral norm is given by
\begin{equation}
\begin{split}
vol(\mathcal{B}(n))&=\frac{1}{(2 \pi)^{n}n!} vol(\mathcal{U}(n))^{2} \times \\
&\times \int_{0}^{1} \prod_{k=1}^{n}\sigma_{k} \prod_{i<j} \vert \sigma_{j}^2 - \sigma_{i}^2 \vert^{2} \prod_{k=1}^{n}d\sigma_{k}. 
\label{vol_u_ball_complex}
\end{split}
\end{equation}
The factor $\frac{1}{(2 \pi)^{n}}$ ensures the uniqueness of the singular value decomposition.
In the case interesting from the neutrino physics perspective, i.e. $n=3$, this gives
\begin{equation}
vol(\mathcal{B}(3)) = \frac{\pi^{9}}{8640}.    
\label{u_ball_complex}
\end{equation}
We can use the formulas for the volumes of $\mathcal{B}(3)$ and $\tilde{\mathcal{B}}(3)$ as the basis in the calculation of the volume of the $\Omega$ region. As the $\Omega$ region is defined as the convex hull of the \texttt{PMNS} matrices, to find its volume we need to replace in the formulas \eqref{vol_u_ball_real} and \eqref{vol_u_ball_complex} $vol(\mathcal{U}(n))$ and $vol(\mathcal{O}(n))$ by $vol(\texttt{PMNS})$ in the general and CP conserving case, respectively. Moreover, it is also necessary to restrict ranges of singular values for those allowed by current experimental data \eqref{sing_val_phys}.  
As the result, for the CP invariant scenario, we have the following formula
\begin{equation}
vol(\tilde{\Omega})=\frac{1}{2^{n}n!} vol(\texttt{PMNS})^{2} \int_{\sigma_{min}}^{1} \prod_{i<j} \vert \sigma_{j}^2 - \sigma_{i}^2 \vert \prod_{k=1}^{n}d\sigma_{k}. 
\end{equation}
Taking into account current experimental bounds \eqref{pmns_limits} and allowed ranges for singular values \eqref{sing_val_phys}, the numerical value is equal
\begin{equation}
vol(\tilde{\Omega}) = 6.45 \times 10^{-16}.
\end{equation}
Thus, the $\Omega$ region in the CP invariant case constitutes only $1.84 \times 10^{-20} $ of the unit ball \eqref{vol_u_ball_real}.

For the general case including the CP phase, the formula for the volume of the $\Omega$ region is given by
\begin{equation}
\begin{split}
vol(\Omega)&=\frac{1}{(2 \pi)^{n}n!} vol(\texttt{PMNS})^{2} \times \\
&\times \int_{\sigma_{min}}^{1} \prod_{k=1}^{n}\sigma_{k} \prod_{i<j} \vert \sigma_{j}^2 - \sigma_{i}^2 \vert^{2} \prod_{k=1}^{n}d\sigma_{k}, 
\end{split}
\end{equation}
and its numerical value is
\begin{equation}
vol(\Omega) = 1.12 \times 10^{-18}.
\end{equation}
In the complex case, the contribution of the $\Omega$ region is even smaller than in the CP invariant scenario and it constitutes only $4.34 \times 10^{-27} $ of the respective unit ball in the spectral norm \eqref{vol_u_ball_complex}. It may look like that $vol(\tilde{\Omega})$ is larger than the $vol(\Omega)$, however, we must keep in mind that $\tilde{\Omega}$ and $\Omega$ are structures of different dimensions, and thus cannot be compared directly.  

We have established earlier the characterization of the $\Omega$ region as the intersection of the $\mathcal{B}(3)$ and $U_{int}$ \eqref{omega_intersection}. The $U_{int}$ can be treated as a hyperrectangle in $\mathbb{R}^{9}$ or $\mathbb{C}^{9} \simeq \mathbb{R}^{18}$ respectively for the CP invariant case and the general case. As such, they also are geometric structures with associated volume. This volume is simply the product of the length of its sides, i.e. given intervals. Thus for the CP conserving case, it gives
\begin{equation}
vol(U_{int}) = 2.84 \times 10^{-10}.    
\end{equation}
Whereas when the CP phase is taken into account it is equal to
\begin{equation}
vol(U_{int}) = 2.27 \times 10^{-11}.    
\end{equation}
In \cite{Bielas:2017lok} statistical analysis was performed concerning the amount of physically admissible mixing matrices contained within the interval matrix $U_{int}$. The analysis establishes that for the CP invariant scenario only about $4\%$ of matrices within the interval matrix are contractions. Comparison of volumes gives a similar qualitative result, namely contractions make a small part of the $U_{int}$. The exact calculation reveals that the volume of the $\Omega$ region constitutes $2.3 \times 10^{-4}$ percent of $O_{int}$ in the CP conserving case and $8.12 \times 10^{-6} $ percent of $U_{int}$ for the general complex scenario.

We can check how $vol(\Omega)$ and vol(\texttt{PMNS}) are sensitive to the precision of neutrino parameters. For example, if the range of the CP phase shrinks twice, i.e., we assume  $\delta \in [182.5^{\circ},306^{\circ}]$ then we get
\begin{equation}
\begin{split}
&vol(\texttt{PMNS}) = 7.345 \times 10^{-5}, \\
&vol(\Omega) = 2.759 \times 10^{-19}.
\end{split}
\end{equation}

We can see that $vol(\texttt{PMNS})$ decreased twice, however,  its value  is still a few orders of magnitudes higher than for $vol(\texttt{CKM})$, while $vol(\Omega)$ decreased almost by order of magnitude. 

\section{Summary and outlook}

Neutrino mixings connected with known active three neutrino states can be embedded directly into the $3 \times 3$ unitary matrix. In the case of extra neutrino species, which are still awaiting discovery, the $3 \times 3$ mixing matrix must be extended to a larger unitary matrix.

In general, the physical space of neutrino mixings determined experimentally constitutes a geometric region of finite convex combinations of unitary $3 \times 3$ \texttt{PMNS} mixing matrices. We studied the structure of this region, which is a part of a unit ball of a spectral norm.  

We have described subsets corresponding to a different minimal number of additional neutrinos as relative interiors of faces of this unit ball. This feature of the geometric region allows for the independent phenomenological analysis of 3+n neutrino mixing models. We also gave an alternative characte\-ristic of these subsets in terms of Ky-Fan k-norms.  

 We showed that the Cara\-th\'eo\-dory's number for the $\Omega$ region equals maximally four. 
In 3+1 and 3+2 scenarios, the Cara\-th\'eo\-dory's number is 2 and 3, respectively. This result allows constructing all matrices from the region as the convex combination in an optimal way. 
 We demonstrated a particular construction of contractions with two and one singular values strictly less than one, and with singular values encoded in the coefficients.
Knowing the basis with a minimal number of generating matrices, we will be able to make concrete phenomenological studies of light-heavy neutrino mixings {\it independently} in 3+2 and 3+3 scenarios. It will extend our previous studies using dilation procedure and obtained limits on active-sterile mixings in the 3+1 scenario where one extra neutrino state is present \cite{Flieger:2019eor}. 

We  established the size of the region of the physi\-cally admissible mixing matrices by calculating its volume. 
As zero volume would mean that neutrino parameters are deter\-mined experimentally without errors, its size informs us in some way about the fide\-lity of experi\-mental data extraction. In the case of three neutrino mixing sce\-nario this volume shows that in neutrino physics, compared to the whole space of the mixing para\-meters, a space of possible neutrino mixing para\-meters is already restricted considerably, though when comparing with quark mixings and corresponding volume, the neutrino volume is many orders of magnitude larger.  When additional neutrinos are under consideration the region  narrows down comparing to $\mathcal{B}(3)$ and $U_{int}$  { where $\mathcal{B}(3)$ describes all $3 \times 3$ contractions whereas $U_{int}$ contains experimentally established ranges 
for neutrino mixing matrices and $\Omega$ is the intersection of these two structures.
}
The size of this region will be further squeezed by increasing precision (via increasing  statistics) of future neutrino physics experiments, especially for CP-violating scenarios when the Dirac CP-phase will be determined.

As an outlook, apart from studying unitary extensions of admissible matrices from the $\Omega$ region and the light-heavy neutrino mixings,  
we also plan to apply methods of semidefinite programming and find information about the position of the $\Omega$ region within the unit ball of the spectral norm.
This study will help determine preferable parameter space for future searches for sterile neutrinos in models with different number of extra neutrino states.


\begin{acknowledgements}
We thank Bartosz Dziewit for useful remarks. This work has been supported in part by the Polish National Science Center (NCN) under grant 2020/37/B/ST2/02371 and the Research Excellence Initiative of the University of Silesia in Katowice. 
\end{acknowledgements}

\section{Appendix}

\appendix 

\section{Norms \label{appnorms}}

\begin{definition}
\label{matrix_norm}
A matrix norm is a function $\Vert \cdot \Vert$ from the set of all matrices $\mathbb{M}_{n \times m}$ into $\mathbb{R}$ that satisfies the following properties
\begin{equation}
\begin{split}
&\Vert A \Vert \geq 0 \ and \ \Vert A \Vert=0 \Leftrightarrow A=0, \\
&\Vert \alpha A \Vert = \vert \alpha \vert \Vert A \Vert, \\
&\Vert A+B \Vert \leq \Vert A \Vert + \Vert B \Vert, \\
&\Vert A B \Vert \leq \Vert A \Vert \Vert B \Vert.
\end{split}
\end{equation}
\end{definition} 
In other words, the matrix norm is a vector norm with the additional condition of submultiplicativity. 

There exists an important class of matrix norms consisting of matrix norms which do not change by the unitary multiplication.
\begin{definition}
A matrix norm $\Vert \cdot \Vert$ is called unitarily invariant if for every unitary matrices $U,V$  and a given matrix $A$ it satisfies
\begin{equation}
\Vert U A V \Vert = \Vert A \Vert.
\label{uni_inv_norm}
\end{equation}
\end{definition}

Another important class of matrix norms, called the induced matrix norms, contains matrix norms that are obtained from the vector norms in the following way
\begin{equation}
\Vert A \Vert_{\star} = \max_{\Vert x \Vert_{\star}=1} \Vert Ax \Vert_{\star},   
\end{equation}
where $\Vert \cdot \Vert_{\star}$ stands for the corresponding vector norm. In our case, of particular interest is the matrix norm induced from the Euclidean 2-norm $\Vert x \Vert_{2} = \sqrt{\sum_{i=1}^{n} x_{i}^{2}} = \sqrt{(x,x)}=\sqrt{x^{\dag}x}$ for $x=(x_{1},\dots,x_{n})^{T}$. 
From the Rayleigh quotient $\lambda_{\max}(A) = \max_{\Vert x \Vert_{2} =1} x^{\dag} A x$ \cite{horn_johnson_2012}, we have
\begin{equation}
\begin{split}
\Vert A \Vert_{2}^{2} &= \max_{\Vert x \Vert_{2}=1} \Vert Ax \Vert_{2}^{2} = \max_{\Vert x \Vert_{2}=1} (Ax)^{\dag}Ax = \max_{\Vert x \Vert_{2}=1} x^{\dag}A^{\dag}Ax \\
& = \lambda_{\max}(A^{\dag}A) = \sigma_{1}^{2}(A).
\end{split}
\end{equation}
Thus, the matrix norm $\Vert \cdot \Vert_{2}$ can be defined as the largest singular value of a given matrix. This matrix norm is called an operator norm or spectral norm and will be denoted  as $\Vert \cdot \Vert$. Thus,
\begin{definition}
A spectral norm of a matrix $A \in \mathbb{M}_{n \times m}$ is the matrix norm defined as
\begin{equation}
\Vert A \Vert = \max_{\Vert x \Vert_{2}=1} \Vert Ax \Vert_{2} =\sigma_{1}(A).     
\end{equation}
\end{definition}
\noindent Moreover, the spectral norm is also a unitary invariant norm \eqref{uni_inv_norm}.

\section{Cosine-Sine (CS) decomposition \label{appCS}}

\begin{theorem}
\label{CS_decomposition}
Let the unitary matrix $U \in \mathbb{M}_{(n+m) \times (n+m)}$ be partitioned as
\begin{equation}
U=
\begin{blockarray}{ccc}
n & m & \\ 
\begin{block}{(cc)c}
U_{\texttt{PMNS}} & U_{lh} & \ n \\
U_{hl} & U_{hh} & \ m \\
\end{block}
\end{blockarray},
\end{equation}
If $m \geq n$, then there are unitary matrices $W_{1}, Q_{1} \in M_{n \times n}$ and unitary matrices $W_{2}, Q_{2} \in \mathbb{M}_{m \times m}$ such that
\begin{equation}
\begin{split}
&\left(
\begin{array}{cc}
U_{\texttt{PMNS}} & U_{lh} \\
U_{hl} & U_{hh}
\end{array}
\right)= 
\left(
\begin{array}{cc}
W_{1} & 0 \\
0 & W_{2}
\end{array}
\right)
\left(
\begin{array}{c|cc}
C & -S & 0 \\ \hline
S & C & 0 \\
0 & 0 & I_{m-n}
\end{array}
\right)
\left(
\begin{array}{cc}
Q_{1}^{\dag} & 0 \\
0 & Q_{2}^{\dag}
\end{array}
\right),
\end{split}
\end{equation}
where $C \geq 0$ and $S \geq 0$ are diagonal matrices satisfying $C^{2} + S^{2}=I_{n}$.
\end{theorem}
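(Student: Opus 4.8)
The plan is to reduce the whole decomposition to a single singular value decomposition of the top-left corner block and then let the unitarity of $U$ force every other block into place. First I would exploit the fact that a submatrix of a contraction is again a contraction: since $U$ is unitary, $\Vert U_{\texttt{PMNS}}\Vert \le \Vert U\Vert = 1$, so all singular values of the $n\times n$ block $U_{\texttt{PMNS}}$ lie in $[0,1]$. Writing its SVD as $U_{\texttt{PMNS}} = W_1 C Q_1^{\dag}$ with $W_1,Q_1\in\mathbb{M}_{n\times n}$ unitary and $C=\mathrm{diag}(c_1,\dots,c_n)$, $0\le c_i\le 1$, I would form the still-unitary matrix $\tilde U = \mathrm{diag}(W_1^{\dag},I_m)\,U\,\mathrm{diag}(Q_1,I_m)$, whose top-left block is exactly $C$ while the off-diagonal corners become $\tilde U_{lh}=W_1^{\dag}U_{lh}$ and $\tilde U_{hl}=U_{hl}Q_1$.

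Next I would read off the remaining blocks from orthonormality. Orthonormality of the first $n$ columns of $\tilde U$ gives $C^2+\tilde U_{hl}^{\dag}\tilde U_{hl}=I_n$, so $\tilde U_{hl}^{\dag}\tilde U_{hl}=I_n-C^2$ is diagonal; defining $S=(I_n-C^2)^{1/2}=\mathrm{diag}(s_1,\dots,s_n)$, $s_i\ge0$, already yields the defining identity $C^2+S^2=I_n$ and shows the columns of $\tilde U_{hl}$ are orthogonal with norms $s_i$. Normalizing the nonzero columns and extending to an orthonormal basis of $\mathbb{C}^m$ defines a unitary $W_2\in\mathbb{M}_{m\times m}$ with $W_2^{\dag}\tilde U_{hl}=\left(\begin{smallmatrix} S\\ 0\end{smallmatrix}\right)$. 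Symmetrically, orthonormality of the first $n$ rows gives $\tilde U_{lh}\tilde U_{lh}^{\dag}=I_n-C^2=S^2$, so I would build a unitary $Q_2\in\mathbb{M}_{m\times m}$ from the normalized rows of $\tilde U_{lh}$, with an overall sign chosen so that $\tilde U_{lh}Q_2=\left(\begin{smallmatrix} -S & 0\end{smallmatrix}\right)$.

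The main obstacle, and the only step beyond routine bookkeeping, is to check that this single pair $W_2,Q_2$ also puts the $m\times m$ bottom-right block into the form $\mathrm{diag}(C,I_{m-n})$. Here I would invoke the full unitarity of $\hat U=\mathrm{diag}(I_n,W_2^{\dag})\,\tilde U\,\mathrm{diag}(I_n,Q_2)$, which is automatic since it is a product of unitaries. Writing its bottom-right block as $\left(\begin{smallmatrix} B_{11}&B_{12}\\ B_{21}&B_{22}\end{smallmatrix}\right)$ in the $n,(m-n)$ partition, orthogonality of the first column-block $(C;S;0)$ against the second $(-S;B_{11};B_{21})$ and third $(0;B_{12};B_{22})$ column-blocks gives $-CS+SB_{11}=0$ and $SB_{12}=0$; on the indices with $s_i>0$ these force $B_{11}=C$ and $B_{12}=0$, and the column- and row-norm conditions then force $B_{21}=0$ and leave $B_{22}$ an $(m-n)\times(m-n)$ unitary. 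A final SVD $B_{22}=P_1P_2^{\dag}$, with $P_1,P_2$ absorbed into the trailing $m-n$ basis vectors of $W_2$ and $Q_2$ (which do not affect the already-fixed blocks), replaces $B_{22}$ by $I_{m-n}$.

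Finally I would dispose of the degenerate indices, where the normalizations above break down: grouping the columns according to $c_i=1$ (so $s_i=0$), $0<c_i<1$, and $c_i=0$ (so $s_i=1$), the vanishing columns of $\tilde U_{hl}$ and rows of $\tilde U_{lh}$ are matched by the free basis vectors used to complete $W_2$ and $Q_2$, and within each group the block relations close up consistently. Reassembling the factors $\mathrm{diag}(W_1,W_2)$, the structured middle matrix, and $\mathrm{diag}(Q_1^{\dag},Q_2^{\dag})$ then gives the claimed decomposition.
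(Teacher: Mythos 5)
The paper itself does not prove Theorem \ref{CS_decomposition}: it is quoted in the appendix as a classical result (the CS decomposition) with no argument attached, so there is no in-paper proof to compare yours against and I can only judge your proposal on its own terms. Your route is the standard one --- SVD of the corner block, then orthonormality of the bordered columns and rows to manufacture $W_{2}$ and $Q_{2}$ --- and on the indices with $s_{i}>0$ every step is correct: the Gram identities $C^{2}+\tilde U_{hl}^{\dag}\tilde U_{hl}=I_{n}$ and $\tilde U_{lh}\tilde U_{lh}^{\dag}=S^{2}$ hold, the normalized columns and rows extend to unitaries precisely because $m\geq n$, and the orthogonality relations $SB_{11}=SC$ and $SB_{12}=0$ do pin down the corresponding rows and columns of the bottom-right block.

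The gap is in the degenerate set $Z=\lbrace i: c_{i}=1,\ s_{i}=0\rbrace$, which you defer to the closing paragraph with ``the block relations close up consistently'' --- they do not close up by themselves. For $i,j\in Z$ the orthogonality and norm conditions leave $(B_{11})_{ij}$, the rows of $B_{12}$ indexed by $Z$, and the columns of $B_{21}$ indexed by $Z$ entirely undetermined; what unitarity actually yields is that the $(\vert Z\vert+m-n)\times(\vert Z\vert+m-n)$ submatrix of the bottom-right block supported on $Z$ together with the trailing $m-n$ coordinates is unitary, not that it equals $\mathrm{diag}(I_{\vert Z\vert},B_{22})$. Your final SVD step absorbs only an $(m-n)\times(m-n)$ block, which is not enough. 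The fix is short and should be made explicit: the columns of $W_{2}$ and $Q_{2}$ supported on $Z$ and on the trailing $m-n$ slots were free completions, so you may replace $W_{2}\mapsto W_{2}G$ and $Q_{2}\mapsto Q_{2}H$ with $G,H$ unitary and equal to the identity on the coordinates in $P=\lbrace i: s_{i}>0\rbrace$; this leaves the blocks $\left(\begin{smallmatrix} S\\ 0\end{smallmatrix}\right)$ and $\left(\begin{smallmatrix} -S & 0\end{smallmatrix}\right)$ unchanged, because those blocks vanish on the coordinates being mixed, while conjugating the residual unitary block to the identity. With that one additional observation the proof is complete.
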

There exists another form of the CS decomposition which is more important from the neutrino physics perspective. Let $U_{\texttt{PMNS}}$ have the singular value decomposition $U_{\texttt{PMNS}}=W_{1}diag(I_{r},C)Q_{1}^{\dag}$, where $I_{r}$ denotes $r$ singular values equal to one, and $C$ contains singular values that are strictly less than one. The structure of the CS decomposition reveals the intriguing fact, namely the minimal dimension of the unitary dilation of a given contraction is not arbitrary, but is encoded in the number of singular values strictly less than one.
\begin{cor}
\label{min_dim}
The parametrization of the unitary dilation of the smallest size is given by
\begin{equation}
\begin{split}
&
\left(
\begin{array}{cc}
U_{\texttt{PMNS}} & U_{lh} \\
U_{hl} & U_{hh}\
\end{array}
\right)=
\left(
\begin{array}{cc}
W_{1} & 0 \\
0 & W_{2} 
\end{array}
\right)
\left(
\begin{array}{cc|c}
I_{r} & 0 & 0 \\
0 & C & -S \\ \hline
0 & S & C
\end{array}
\right)
\left(
\begin{array}{cc}
Q_{1}^{\dag} & 0 \\
0 & Q_{2}^{\dag}
\end{array}
\right),
\end{split}
\end{equation}
where $r=n-m$ is the number of singular values equal to 1 and $C=diag(\cos \theta_{1},...,\cos \theta_{m})$ with $\vert \cos \theta_{i} \vert <1$ for $i=1,...,m$.
\end{cor}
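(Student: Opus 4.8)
The plan is to prove the statement constructively: I would exhibit the claimed dilation explicitly, verify that it is unitary and reduces to $U_{\texttt{PMNS}}$ in the leading block, and then argue separately that its size $n+m$ cannot be reduced. The block pattern is precisely the Cosine--Sine structure of Theorem~\ref{CS_decomposition}, so the construction side of the argument is essentially a consistency check of that specialization; the genuinely new content is the minimality of the size.

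First I would start from the assumed singular value decomposition $U_{\texttt{PMNS}} = W_{1}\,\mathrm{diag}(I_{r},C)\,Q_{1}^{\dag}$, in which the $r$ unit singular values are collected in $I_{r}$ and the $m=n-r$ singular values strictly below one are collected in $C=\mathrm{diag}(\cos\theta_{1},\dots,\cos\theta_{m})$ with $\vert\cos\theta_{i}\vert<1$. Defining $S=\mathrm{diag}(\sin\theta_{1},\dots,\sin\theta_{m})\geq 0$, the entrywise identity $C^{2}+S^{2}=I_{m}$ holds, and since $C$ and $S$ are diagonal they commute, $CS=SC$.

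Next I would verify that the candidate is a unitary dilation. The core block
\[
M=\left(\begin{array}{cc|c} I_{r} & 0 & 0 \\ 0 & C & -S \\ 0 & S & C \end{array}\right)
\]
is unitary: the $I_{r}$ summand is trivially so, while the lower $2m\times 2m$ block $\left(\begin{smallmatrix} C & -S \\ S & C \end{smallmatrix}\right)$ satisfies $\left(\begin{smallmatrix} C & S \\ -S & C \end{smallmatrix}\right)\left(\begin{smallmatrix} C & -S \\ S & C \end{smallmatrix}\right)=I_{2m}$ by $C^{2}+S^{2}=I_{m}$ together with $CS=SC$. Flanking $M$ by the block-diagonal unitaries $\mathrm{diag}(W_{1},W_{2})$ and $\mathrm{diag}(Q_{1}^{\dag},Q_{2}^{\dag})$ preserves unitarity for any auxiliary $W_{2},Q_{2}\in\mathbb{M}_{m\times m}$, and extracting the leading $n\times n$ block recovers $W_{1}\,\mathrm{diag}(I_{r},C)\,Q_{1}^{\dag}=U_{\texttt{PMNS}}$. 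The overall size is $r+2m=n+m$, so appending $m$ rows and columns suffices.

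The hard part will be the minimality claim, namely that one cannot append fewer than $m$ rows and columns. For this I would use the defect of the contraction: from $U_{\texttt{PMNS}}^{\dag}U_{\texttt{PMNS}}=Q_{1}\,\mathrm{diag}(I_{r},C^{2})\,Q_{1}^{\dag}$ one obtains $I-U_{\texttt{PMNS}}^{\dag}U_{\texttt{PMNS}}=Q_{1}\,\mathrm{diag}(0,S^{2})\,Q_{1}^{\dag}$, whose rank is exactly $m$ because every $\sin\theta_{i}\neq 0$. Invoking the standard lower bound that any unitary dilation of a contraction $A$ must add at least $\mathrm{rank}(I-A^{\dag}A)$ dimensions then forces at least $m$ new dimensions, which the construction attains. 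I expect the unitarity verification to be mechanical, and the real obstacle to be stating and applying this minimality bound cleanly, since it is the one ingredient that does not follow by direct substitution into the CS block form.
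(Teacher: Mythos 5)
Your proof is correct, and it is more complete than what the paper actually provides: the paper simply reads Corollary~\ref{min_dim} off the block structure of the general CS decomposition (Theorem~\ref{CS_decomposition}), presenting it as ``another form'' of that theorem with the $I_r$ block moved into the contraction and asserting, without a separate argument, that the minimal dilation dimension is encoded in the number of singular values strictly less than one. Your construction step is essentially the same as the paper's (it is the CS block form rebuilt by hand from the SVD $U_{\texttt{PMNS}}=W_1\,\mathrm{diag}(I_r,C)\,Q_1^{\dag}$), but your minimality argument via the defect operator is a genuinely different and more elementary route: the paper's implicit justification would be that the CS decomposition of \emph{any} $(n+k)\times(n+k)$ unitary dilation forces at least $n-k$ unit singular values in the top-left block, hence $k\geq m$, whereas you get the same bound directly from unitarity. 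Your ``standard lower bound'' is in fact a one-line verification you could include to make the argument self-contained: writing the dilation as $V=\bigl(\begin{smallmatrix} A & B \\ C & D\end{smallmatrix}\bigr)$ with $C$ of size $k\times n$, the relation $V^{\dag}V=I$ gives $C^{\dag}C=I-A^{\dag}A$, so $k\geq\mathrm{rank}(C)=\mathrm{rank}(I-A^{\dag}A)=m$. With that line added, your proof is complete and arguably preferable, since it does not rely on quoting the CS decomposition in a form (arbitrary $k<n$) that the paper's Theorem~\ref{CS_decomposition} does not actually state.
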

This is crucial from the physical point of view, since it tells that the minimal number of sterile neutrinos is not arbitrary, but depends on the singular values of the \texttt{PMNS} mixing matrix.

\bibliographystyle{elsarticle-num}       
\bibliography{biblio}   

\end{document}